\DeclareMathAlphabet{\mathcal}{OMS}{cmsy}{m}{n}
\spnewtheorem{construction}{Construction}{\bf}{\rm}
\newcommand{\vc}{\mathbf{c}}
\newcommand{\F}{\mathbb{F}}
\newcommand{\Z}{\mathbb{Z}}
\newcommand{\GG}{\mathcal{G}}
\newcommand{\FF}{\mathcal{F}}
\newcommand{\HH}{\mathcal{H}}
\newcommand{\YY}{\mathcal{Y}}
\newcommand{\CC}{\mathcal{C}}
\newcommand{\Ss}{\mathcal{S}}
\newcommand{\cl}{\mathrm{cl}}
\newcommand{\opt}{\mathrm{opt}}
\newcommand{\ie}{\emph{i.e.}}
\newcommand{\supp}{\mathrm{supp}}
\newcommand{\wt}{\mathrm{wt}}
\journalname{Designs, Codes and Cryptography}
\begin{document}

\title{The complete hierarchical locality of the punctured simplex code
	\thanks{Part of the results were submitted without any proofs to IEEE International Symposium Information Theory (ISIT) 2019.}
}

\author{Matthias Grezet \and Camilla Hollanti
}

\institute{Matthias Grezet \and Camilla Hollanti \at
              Department of Mathematics and Systems Analysis, Aalto University, Finland \\
            \email{matthias.grezet@aalto.fi, camilla.hollanti@aalto.fi} 
}

\date{Received: date / Accepted: date}
% The correct dates will be entered by the editor

\maketitle

\begin{abstract}
This paper presents a new alphabet-dependent bound for codes with hierarchical locality. Then, the complete list of possible localities is derived for a class of codes obtained by deleting specific columns from a Simplex code. This list is used to show that these codes are optimal codes with hierarchical locality. 
\keywords{Hierarchical locality \and Alphabet-dependent bound \and Simplex code \and Matroid theory}
 \subclass{94B65 \and 94B60 \and 05B35}
\end{abstract}

\section{Introduction}
%Introduction

%Contains : 

In modern distributed storage systems (DSSs) failures happen frequently, whence decreasing the number of connections required for node repair is crucial. Locally repairable codes (LRCs) are a subclass of erasure-correcting codes, which allow a small number of failed nodes to be repaired by accessing only a few other nodes. LRCs were introduced in \cite{gopalan12}, \cite{papailiopoulos12} where the codes can locally repair one failure. They  were later extended in \cite{prakash12}, \cite{kamath13} to be able to locally repair up to $\delta-1$ failures. 

An $[n,k,d]$ linear code $\CC$ of length $n$, dimension $k$, and minimum hamming distance $d$, has all-symbol locality $(r,\delta)$ if for all code symbols $i \in [n]=\{1, \ldots , n \}$, there exists a set $R_{i} \subseteq [n]$ containing $i$ such that $|R_{i}| \leq r + \delta -1$ and the minimum distance of the restriction of $\CC$ to $R_{i}$ is at least $\delta$. We refer to $\CC$ as an $(n,k,d,r,\delta)$-LRC and to the sets $R_{i}$ as repair sets or local sets. Related Singleton-type bounds have been derived for various cases in \cite{gopalan12,papailiopoulos12,prakash12} and the first bound with a fixed code alphabet was obtained in \cite{cadambe15} for $\delta=2$. Constructions achieving the Singleton-type bounds and the bound in \cite{cadambe15} for $\delta=2$ were provided in \cite{papailiopoulos12,prakash12, kamath13, rawat16, kamath13b, rawat14b, tamo14, westerback15, silberstein18, zeh15}. 

The authors of \cite{agarwal18} proposed the first alphabet-dependent bound on LRCs over the alphabet $Q$ using an upper bound $B(n,d)$ on the cardinality of a code of length $n$ and minimum distance $d$. The global bound is as follows:  
\begin{equation}
\label{eq:ABHMT}
k \leq \left( \left\lceil \frac{n-d+1}{r+\delta-1} \right\rceil +1 \right) \log_{q} B(r+\delta-1,\delta).
\end{equation}

Recently, \cite{grezet18LRC} provided a different alphabet-dependent bound for LRCs of the same type as the bound in \cite{cadambe15} using the Griesmer bound $\GG_{q}(k,d):=\sum\limits_{i=0}^{k-1} \lceil d / q^{i} \rceil$. The bound has the following form : For any linear $(n,k,d,r,\delta)$-LRC $\CC$ with $\kappa$ the upper bound on the local dimension of the repair sets, 
\begin{equation}
\label{eq:CMG_r}
k \leq \min_{\lambda \in \Z_{+}} \left\{ \lambda + k_{\opt}^{(q)}(n-\mu, d ) \right\}
\end{equation}
where $a,b \in \Z$ such that $\lambda = a \kappa + b, 0 \leq b < \kappa$ and $\mu = (a+1) \GG_{q}(\kappa, \delta) - \GG_{q}(\kappa - b, \delta) $.

%H-LRC--------
In \cite{sasidharan15}, the authors  introduced the notion of codes with hierarchical locality (H-LRCs), which optimizes futher the number of nodes contacted for repair according to the number of failures. A 2-level H-LRC is a code where the restrictions to the repair sets are themselves LRCs, thus providing an extra layer of locality. If an H-LRC has locality $[(r_{1},\delta_{1}),(r_{2},\delta_{2})]$, then the number of nodes contacted to repair up to $\delta_{2}-1$ failures is at most $r_{2}$ and the number of nodes contacted for repair is at most $r_{1}$ if the number of failures is $\geq \delta_{2}$ and $\leq \delta_{1}-1$. This concept can be easily generalized to an arbitrary level of hierarchy. A Singleton-type bound for codes with hierarchical locality was derived in \cite{sasidharan15} and constructions attaining the bound were provided in \cite{sasidharan15, ballentine18}.

In Section \ref{sec:bound}, we show how we can adapt the construction algorithms provided in \cite{sasidharan15} to obtain an alphabet-dependent bound for H-LRCs of the same type as in \cite{cadambe15}. By construction, this bound is at least as good as the Singleton-type bound derived in \cite{sasidharan15}. 

In Section \ref{sec:Sm}, we study the locality of one particular construction of LRCs presented in \cite{silberstein18}. The general idea of this construction is to remove a Simplex code from another Simplex code of higher dimension. It was shown in \cite{silberstein18} that these codes achieve the Griesmer bound and are LRCs with $\delta=2$. The goal in this section is to prove the locality for every dimension and higher $\delta$ and show that this construction leads to optimal LRCs for every locality by the bound \eqref{eq:CMG_r} and to optimal H-LRCs by the new bound derived in Section \ref{sec:bound}.

As a first step, we describe the restrictions of dimension $k-1$ and prove the locality for this dimension using combinatorial techniques. These results allow us to derive the weight enumerator of the constructed codes. As a second step, we use a recursive argument to get all the restrictions of these codes to closed sets. Our main contribution is the complete list of possible localities for these codes. In particular, this shows that the constructed codes are alphabet-optimal H-LRCs. Finally, since a special case of this construction leads to the Reed--Muller codes RM$(1,m)$, we obtain as a corollary to our result that the Reed--Muller codes RM$(1,m)$ are H-LRCs and we derive their locality parameters.

\section{Preliminaries}
\label{sec:def}
%Preliminaries

%Contains : Coding theory notations. 

We denote the set $\{1,2,\ldots, n\}$ by $[n]$ and the set of all subsets of $[n]$ by $2^{[n]}$. The set of all positive integers including $0$ is denoted by $\Z_{+}$. 

The Gaussian coefficient, which counts the number of subspaces of dimension $k$ in the vector space $\F_{q}^{n}$, is denoted by
\[
{n \brack k}_{q}=\prod\limits_{i=0}^{k-1}\frac{q^{n-i}-1}{q^{i+1}-1}.
\]

For a length-$n$ vector $\mathbf{v}$ and a set $I \subseteq [n]$, the vector $\mathbf{v}_{I}$ denotes the restriction of the vector $\mathbf{v}$ to the coordinates in the set $I$. A generator matrix of a linear code $\CC$ is $G_{\CC} = (\mathbf{g_{1}} \cdots \mathbf{g_{n}})$ where $\mathbf{g_{i}} \in \F_{q}^{k}$ is a column vector for $i \in [n]$. The shortening of a code $\CC$ to the set of coordinates $I \subseteq [n]$ is defined by $\CC/I = \{ \mathbf{c}_{[n]\setminus I} : \mathbf{c} \in \CC \text{ such that } c_{i}=0 \text{ for all } i \in I\}$ and the restriction of a code $\CC$ to  $I$ is defined by $\CC|_{I}=\{\mathbf{c}_{I} : \mathbf{c} \in \CC \}$. For convenience, we call the codes obtained by a restriction \emph{restricted codes}. Two linear codes $\CC$ and $\CC'$ are called isomorphic if $\CC'$ can be obtained by a permutation on the coordinates of the codewords of $\CC$.

The support of a codeword $\vec{c} \in \CC$ is $\supp(\vec{c})=\{ i \in [n] : c_{i} \neq 0 \}$ and its weight is $\wt(\vec{c})=|\supp(\vec{c})|$. The weight enumerator of $\CC$ is defined as 
\[
W_{\CC}(x, y)=\sum\limits_{\vec{c} \in \CC}x^{n-\wt(\vec{c})}y^{\wt(\vec{c})}. 
\]

The Simplex code $\Ss(m)$, or sometimes $\Ss_{q}(m)$, is a linear code over $\F_{q}$ obtained via the generator matrix $G_{m}$ consisting of all pairwise linearly independent vectors in $\F_{q}^{m}$. The parameters of $\Ss(m)$ are therefore $[(q^{m}-1)/(q-1), m, q^{m-1}]$.

Since most of the work in this paper is done on subsets of coordinates and restricted codes, it is preferable to use the notion of an entropy function on the subsets $I \subseteq [n]$ that corresponds to the dimension of the restriction to $I$. We only state here the definition for linear codes but it can be generalize to bigger class of codes (see \cite{westerback18}).

\begin{definition}
Let $\CC$ be a linear code of length $n$ and $I \subseteq [n]$. The entropy associated to $\CC$ is the function $H_{\CC} : [n] \to \Z$ with $H_{\CC}(I) = \dim (\CC|_{I})$.
\end{definition}

For ease of notation, if the underlying code of $H_{\CC}$ is clear, we drop the specification to $\CC$. For a subset $I \subseteq [n]$, $H_{\CC}(I)$ is equivalent to the rank of the submatrix of the generator matrix formed by the columns $\mathbf{g}_{i}$ with $i \in I$ or the rank function of $I$ in the associated matroid of $\CC$. As such, it has the following standard properties.

\begin{proposition}
Let $\CC$ be a linear code of length $n$ and $H$ the entropy function associated to $\CC$. For $I,J \subseteq [n]$, we have
\begin{enumerate}
\item $H(I) \leq |I|$,
\item If $I \subseteq J$ then $H(I) \leq H(J)$,
\item $H(I \cup J) + H(I \cap J) \leq H(I) + H(J)$.
\end{enumerate}
\end{proposition}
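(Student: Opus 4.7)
The plan is to reduce each of the three claims to a statement about ranks of submatrices of the generator matrix $G_{\CC}$, using the identification $H_{\CC}(I) = \rank(G_I)$ noted in the paragraph preceding the proposition, where $G_I$ denotes the submatrix of $G_{\CC}$ consisting of the columns indexed by $I$. With this in hand, standard linear algebra suffices throughout.

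For (1) and (2) only one-line arguments are needed: $\rank(G_I) \leq |I|$ because $G_I$ has exactly $|I|$ columns, and $\rank(G_I) \leq \rank(G_J)$ whenever $I \subseteq J$ because the column span of $G_I$ then sits inside the column span of $G_J$.

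The only substantive item is (3). Let $V_I \subseteq \F_q^{k}$ denote the column span of $G_I$. I would observe that $V_{I \cup J} = V_I + V_J$, since the columns indexed by $I \cup J$ are exactly the union of those indexed by $I$ and those indexed by $J$, and that $V_{I \cap J} \subseteq V_I \cap V_J$, since each column indexed by $I \cap J$ lies in both $V_I$ and $V_J$. Plugging these into the standard dimension identity $\dim(V_I + V_J) + \dim(V_I \cap V_J) = \dim V_I + \dim V_J$ yields
\[
H(I \cup J) + H(I \cap J) \;\leq\; \dim(V_I + V_J) + \dim(V_I \cap V_J) \;=\; H(I) + H(J).
\]
No step is genuinely difficult, and an equivalent shortcut would be to invoke the well-known fact that the column-rank function of a matrix over a field is a matroid rank function, which satisfies (1)--(3) by definition. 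The mildest subtlety is that the containment $V_{I \cap J} \subseteq V_I \cap V_J$ can in general be strict, but since the claim is only an inequality this is harmless.
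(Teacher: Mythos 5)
Your argument is correct and matches the paper's intent: the paper states this proposition without a written proof, merely pointing out that $H_{\CC}(I)$ is the rank of the column submatrix $G_I$ (equivalently, the rank function of the associated matroid), and invoking the matroid rank axioms (R.1)--(R.3) from Definition 2.3. You spell out the linear-algebra proof of those axioms, with the key step for (3) being $V_{I\cup J}=V_I+V_J$, $V_{I\cap J}\subseteq V_I\cap V_J$, and the dimension identity $\dim(V_I+V_J)+\dim(V_I\cap V_J)=\dim V_I+\dim V_J$; this is exactly the standard verification, so your proof is essentially the same route the paper alludes to.
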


The entropy function also behaves nicely for restricted codes. Let $I \subseteq [n]$ and $\CC|_{I}$ the restriction of $\CC$ to the set $I$. Then for $J \subseteq I$, we have $H_{\CC|_{I}}(J) = H_{\CC}(J)$.

Finally, we define a closure operation on the subsets of $[n]$ for linear codes. 

\begin{definition}
Let $\CC$ be a linear code of length $n$ and $I \subseteq [n]$. The \emph{closure operator} $\cl : 2^{[n]} \to 2^{[n]}$ is $ \cl(I)=\{ e \in [n] : H(I \cup e) = H(I) \}$. A set $I \subseteq [n]$ is a \emph{closed set} if $\cl(I)=I$. 
\end{definition}

One can think of the closure operator via the generator matrix $G_{\CC}$ of $\CC$ where $\cl(I)$ is the set of all columns in $G_{\CC}$ contained in the linear span of the columns indexed by $I$.

%Matroid part -------------------------------------------------
\subsection{Preliminaries on matroids}

Since we work on Simplex codes which have a lot of combinatorial properties, we use some tools coming from matroid theory. Matroids have many equivalent definitions in the literature. Here, we choose to present matroids via their rank functions. Much of the contents in this part can be found in more detail in \cite{freij18}.

\begin{definition}
\label{def:matroid_rank}
A \emph{(finite) matroid} $M=(E, \rho)$ is a finite set $E$ together with a \emph{rank function} $\rho:2^E \rightarrow \mathbb{Z}$ such that for all subsets $X,Y \subseteq E$,
\[
\begin{array}{rl} 
(R.1) & 0 \leq \rho(X) \leq |X|,\\
(R.2) & X \subseteq Y \quad \Rightarrow \quad \rho(X) \leq \rho(Y),\\
(R.3) & \rho(X) + \rho(Y) \geq \rho(X \cup Y) + \rho(X \cap Y). 
\end{array}
\]
\end{definition}

Any matrix $G$ over a field $\F$ generates a matroid $M_G=(E,\rho)$, where $E$ is the set of columns of $G$, and $\rho(X)$ is the rank of the submatrix of $G$ formed by the columns indexed by $X$. As elementary row operations preserve the row space of $G(X)$ for all $X\subseteq E$, it follows that row-equivalent matrices generate the same matroid.  

Thus, there is a straightforward connection between linear codes and matroids. Let $\CC$ be a linear code over a field $\F$. Then any two different generator matrices of $\CC$ will have the same row space by definition, so they will generate the same matroid. Therefore, without any inconsistency, we can denote the matroid associated to these generator matrices by $M_{\CC} = (E, \rho_{\CC})$ where $\rho_{\CC}(X)$ is the rank of the submatrix of $G_{\CC}$ formed by the columns indexed by $X$.

%Isomorphism of matroids and relation with isomorphism of codes.
Two matroids $M_1 = (E_1, \rho_1)$ and $M_2 = (E_2, \rho_2)$ are \emph{isomorphic} if there exists a bijection $\psi: E_1 \rightarrow E_2$ such that $\rho_2(\psi(X)) = \rho_1(X)$ for all subsets $X \subseteq E_1$. We denote two isomorphic matroids by $M_{1} \approx M_{2}$. This implies in particular that if $\CC_{1}$ and $\CC_{2}$ are two linear codes and $M_{\CC_{1}}$ is isomorphic to $M_{\CC_{2}}$, then $\CC_{1}$ is isomorphic to $\CC_{2}$. 

% Deletion, restriction + link with coding theory. Properties link exchange contract and deletion; union of two deletion (should be obvious..). 
One way of defining a new matroid from an existing one is obtained by restricting the matroid to one of its subsets. For a given set $Y \subseteq E$, we define the \emph{restriction} of $M$ to $Y$ to be the matroid $M|Y = (Y, \rho_{|Y})$ by $\rho_{|Y} (X) = \rho(X)$ for all subsets $X \subseteq Y$. The restriction of $M$ to $E-Y$ is called the \emph{deletion} of $Y$ and is denoted by $M \setminus Y$. The two previous operations correspond to the restriction and puncturing of a linear code $\CC$. 

% Lattice of flat with operators and lessdot. Precise clearly that vee is cl() but wedge is just \cap.
% closure and remark about that it coincides with closure w.r.t, entropy. 
Let $M = (E,\rho)$ be a matroid. The \emph{closure} operator $\cl :2^E \rightarrow 2^E$ is defined by $ \cl(X) = \{e \in E  : \rho(X \cup e) = \rho(X)\}$. A subset $F \subseteq E$ is a \emph{flat} if $\cl(F) = F$. The collection of flats of $M$ is denoted by $\FF(M)$ and forms a lattice under the inclusion. For $F_{1}, F_{2} \in \FF(M)$, the meet is $F_{1} \wedge F_{2} = F_{1} \cap F_{2}$ and join is $F_{1} \vee F_{2} = \cl(F_{1} \cup F_{2})$. We denote by $\lessdot$ the covering relation, \emph{i.e.},  $F_{1} \lessdot F_{2}$ if $F_{1} \subsetneq F_{2}$ and there is no $F_{3} \in \FF(M)$ with $F_{1} \subsetneq F_{3} \subsetneq F_{2}$. A \emph{hyperplane} $H$ is a flat of $M$ with $\rho(H)=\rho(E)-1$ and the collection of all hyperplanes is denoted by $\HH(M)$. One interesting property of $\HH(M)$ is that every flat can be expressed as an intersection of hyperplanes, \ie , $\FF(M)= \left\lbrace \cap_{H \in \YY} H : \YY \subseteq \HH(M) \right\rbrace$. Finally, for $Y \subseteq E$, we can express the flats of $M|Y$ via the flats of $M$ by the relation $\FF(M|Y)=\{ F \cap Y : F \in \FF(M) \}$.

\section{Bound for H-LRCs}
\label{sec:bound}
%input 3
%CM type bound for H-LRC 

Codes with hierarchical locality were introduced in \cite{sasidharan15} to optimize further the number of nodes contacted for repair. In this section, we first focus on H-LRCs with $2$-level hierarchy and derive an alphabet-dependent bound for these codes based on \cite{cadambe15}. Then, we extend this bound to H-LRCs with $h$-level hierarchy.

\begin{definition}
Let $r_{2} \leq r_{1}$ and $\delta_{2} < \delta_{1}$. An $[n,k,d]$ linear code $\CC$ is a code with hierarchical locality having locality parameters $[(r_{1},\delta_{1}),(r_{2},\delta_{2})]$ if for all code symbols $i\in [n]$, there exists a set $M_{i} \subseteq [n]$ such that 
\begin{enumerate}
\item $i \in M_{i}$,
\item $H(M_{i}) \leq r_{1}$,
\item The minimum distance of $\CC|_{M_{i}}$ is at least $\delta_{1}$,
\item $\CC|_{M_{i}}$ is an LRC with $(r_{2},\delta_{2})$-locality.
\end{enumerate}
\end{definition}

The codes $\CC|_{M_{i}}$ are called \textit{middle codes} and their restrictions of dimension $\leq r_2$ and minimum distance $\geq \delta_2$ are called \textit{local codes}. Similarly, the \textit{middle sets} and \textit{local sets} are the sets $M_{i}$ and the sets such that the restrictions to them give the local codes. Notice that, contrary to the standard definition of LRCs, the authors of \cite{sasidharan15} bound the dimension of the restricted codes instead of the size. It was proven in \cite[Theorem 2.1]{sasidharan15} that any $[n,k,d]$ linear code with hierarchical locality $[(r_{1},\delta_{1}),(r_{2},\delta_{2})]$ satisfies the following Singleton-type bound 

\begin{equation}
\label{eq:HLRC_SingletonB}
d \leq n-k+1 - \left\lfloor \frac{k-1}{r_{2}} \right\rfloor (\delta_{2}-1) - \left\lfloor \frac{k-1}{r_{1}} \right\rfloor (\delta_{1}-\delta_{2}).
\end{equation}

To obtain the alphabet-dependent bound in \cite{cadambe15}, the authors proved two results: the construction of a restricted code with small dimension and large size and a lemma about shortened codes. The lemma is the following. 

\begin{lemma}[\hspace{1sp}\cite{cadambe15}, Lemma 2]
\label{lemma:contract}
Let $\CC$ be an $[n,k,d]$ linear code over $\F_{q}$ and $I \subseteq [n]$ such that $H(I) < k $. Then the shortened code $\CC/I$ has parameters $[n-|I|, k-H(I), d'\geq d]$. 
\end{lemma}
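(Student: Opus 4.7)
The plan is to prove the three claims about $\CC/I$ separately: length, dimension, and minimum distance. The length $n-|I|$ is immediate from the definition of $\CC/I$ given in the Preliminaries.

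For the dimension, I would consider the evaluation-on-$I$ linear map
\[
\phi : \CC \longrightarrow \F_q^{|I|}, \qquad \vc \longmapsto \vc_I.
\]
Its image is precisely $\CC|_I$, which has dimension $H(I)$ by the definition of the entropy function. Its kernel is the subspace of codewords vanishing on $I$, and the rank--nullity theorem gives that this kernel has dimension $k - H(I)$. The restriction map $\vc \mapsto \vc_{[n]\setminus I}$ sends this kernel bijectively onto $\CC/I$: injectivity is clear since two kernel elements agreeing on $[n]\setminus I$ also agree on $I$ (both are zero there). Hence $\dim(\CC/I) = k - H(I)$, and the hypothesis $H(I)<k$ ensures this dimension is positive so that the minimum distance is defined.

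For the distance, take any nonzero $\vc' \in \CC/I$. By construction there exists $\vc \in \CC$ with $c_i=0$ for $i\in I$ and $\vc_{[n]\setminus I} = \vc'$. Since the coordinates of $\vc$ on $I$ are all zero, $\wt(\vc) = \wt(\vc')$. As $\vc' \neq 0$ forces $\vc \neq 0$, we get $\wt(\vc') = \wt(\vc) \geq d$, so $d' \geq d$.

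The argument is essentially bookkeeping with the rank--nullity theorem plus the observation that zeroing coordinates does not change weight, so there is no serious obstacle; the only subtlety worth flagging is verifying that the restriction-to-$[n]\setminus I$ map really identifies the kernel of $\phi$ with $\CC/I$, which is what justifies the dimension count.
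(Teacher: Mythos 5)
Your proof is correct. A small caveat on the comparison: the paper does not supply its own proof of this lemma; it is quoted from \cite{cadambe15} (Lemma 2) and used as a black box. So there is no in-paper argument to line your proposal up against, but the argument you give is the standard and correct one. The decomposition via the evaluation map $\phi : \CC \to \F_q^{|I|}$, rank--nullity to get $\dim\ker\phi = k - H(I)$, and the identification of $\ker\phi$ with $\CC/I$ through the restriction to $[n]\setminus I$ (injective because kernel elements already agree on $I$) is exactly the right bookkeeping. The distance bound follows as you say because a representative $\vc$ of a nonzero $\vc' \in \CC/I$ has the same weight as $\vc'$, since $\vc$ vanishes on $I$, and $\vc$ is a nonzero codeword of $\CC$. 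You are also right to flag that $H(I) < k$ is needed precisely so that $\CC/I$ is nonzero and $d'$ is defined; that is the only role of that hypothesis.
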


%New results---------

Therefore, to get an alphabet-dependent bound for H-LRCs, we need to construct a set with an upper bound on its entropy and such that its size uses the hierarchical locality property to be as large as possible. To achieve this requirement, we modify the construction algorithm used in the proof of the Singleton-type bound in \cite{sasidharan15}. 

\begin{lemma}
\label{lemma:size_construction_h2}
Let $\CC$ be an $[n,k,d]$ H-LRC with locality $[(r_{1},\delta_{1}),(r_{2},\delta_{2})]$ and $\lambda \in \Z_{+}$ with $0 \leq \lambda \leq k$. Then, there exists a set $I_{c}$ such that 
\begin{itemize}
\item $ H(I_{c}) \leq \lambda $,
\item $|I_{c}| \geq \lambda + \left\lfloor \frac{\lambda}{r_{2}}\right\rfloor (\delta_{2}-1) + \left\lfloor \frac{\lambda}{r_{1}} \right\rfloor  (\delta_{1}-\delta_{2})$.
\end{itemize}
\end{lemma}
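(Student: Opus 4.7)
The plan is to adapt the greedy construction used by Sasidharan et al.\ in \cite{sasidharan15} to prove \eqref{eq:HLRC_SingletonB}, refining the size estimate at the middle level by invoking the classical Singleton-type LRC bound of \cite{prakash12} on each completed middle code, rather than the trivial $|M|\ge H(M)+\delta_1-1$ used in the pure Singleton argument.

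Write $\lambda = a r_1 + b$ with $a = \lfloor \lambda/r_1 \rfloor$ and $0 \le b < r_1$. I will build $I_c$ as a disjoint union of $a$ ``full'' middle pieces $T_1,\dots,T_a$, each of entropy exactly $r_1$, together with one ``partial'' piece $T_0$ of entropy $b$. At each step, since the accumulated entropy is at most $\lambda \le k$, I pick a coordinate $i_j$ outside the current closure, take its middle set $M_{i_j}$, and exhaust the local structure inside it. For $j \ge 1$ I set $T_j = M_{i_j}$: since $\CC|_{M_{i_j}}$ is itself an LRC of dimension $\le r_1$, minimum distance $\ge \delta_1$ and $(r_2,\delta_2)$-locality, the Singleton-type bound for LRCs yields
\[
|T_j| \ge r_1 + \delta_1 - 1 + \bigl(\lceil r_1/r_2\rceil - 1\bigr)(\delta_2-1).
\]
For $T_0$ I take one more fresh middle set and apply the single-level construction of Cadambe--Mazumdar \cite{cadambe15} inside it, obtaining $|T_0| \ge b + \lfloor b/r_2\rfloor(\delta_2-1)$ with $H(T_0)=b$.

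Summing the contributions and using pairwise-disjoint closures to bound $H(I_c)\le a r_1 + b=\lambda$, I obtain
\[
|I_c| \ge \lambda + a(\delta_1-\delta_2) + \bigl(a\lceil r_1/r_2\rceil + \lfloor b/r_2\rfloor\bigr)(\delta_2-1),
\]
from which the lemma follows by the elementary inequality $a\lceil r_1/r_2\rceil + \lfloor b/r_2\rfloor \ge \lfloor (ar_1+b)/r_2\rfloor = \lfloor \lambda/r_2\rfloor$ (easily checked by writing $r_1 = q r_2 + s$ with $0\le s<r_2$ and comparing the two sides in the cases $s=0$ and $s>0$). The main obstacle is making the greedy step rigorous: middle sets can have entropy strictly below $r_1$ or overlap with the current closure, so I will need to argue, in the manner of \cite{sasidharan15}, that whenever a chosen middle set contributes less than $r_1$ of fresh entropy, I can pick a further coordinate outside the enlarged closure and refresh (or extend) the piece until the local $r_1$-budget for $T_j$ is actually reached; the Singleton-type bound on the middle code is then applied to this ``completed'' piece, and the same trick handles the partial $T_0$.
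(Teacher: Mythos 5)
Your closing arithmetic is sound: with $\lambda = ar_1 + b$ and the elementary inequality $a\lceil r_1/r_2\rceil + \lfloor b/r_2\rfloor \geq \lfloor\lambda/r_2\rfloor$, the claimed lower bounds on the $|T_j|$ would indeed give the lemma. The gap, which your last paragraph flags but does not close, is in establishing those bounds. The Singleton-type LRC bound of Prakash et al.\ lower-bounds the \emph{total} length $|M_{i_j}|$ of a middle set, but what $\sum_j|T_j|$ actually needs (to avoid double-counting) is the \emph{fresh} contribution $|\cl(I\cup M_{i_j})| - |I|$. When $M_{i_j}$ overlaps the current closure $I$ this fresh contribution can be as small as $1$ regardless of $|M_{i_j}|$, so the Singleton bound does not transfer. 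Your proposed fix---``refresh or extend the piece until the $r_1$-budget is reached''---produces either a union of several middle sets, to which the single middle-code Singleton bound does not apply, or a middle set whose overlapping part is discarded, for which Singleton gives no bound at all; and ensuring the resulting pieces are pairwise disjoint (so that $|I_c| = \sum|T_j|$ and $H(I_c)\le\sum H(T_j)$ with the $H(T_j)$ summing to exactly $\lambda$) is precisely the difficulty, not something that can be assumed.

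The paper's Algorithm~\ref{algo1} avoids this by working incrementally \emph{inside} each middle set rather than treating a middle code as a black box: at each step it adds a local set $L_i$ not already covered, records the incremental entropy $a_i = H(I\cup L_i) - H(I)$, and uses the incremental size estimate $|\cl(I\cup L_i)| - |I| \geq a_i + \delta_2 - 1$ (which follows from the minimum distance of the local code restricted to the fresh coordinates), then appends a further $\delta_1 - \delta_2$ correction once a middle set is exhausted. The counters $i_2 \ge \lfloor\lambda/r_2\rfloor$ and $i_1 \ge \lfloor\lambda/r_1\rfloor$ then do the work your decomposition $\lambda = ar_1+b$ was meant to do. This bookkeeping is overlap-proof by construction, and it is the machinery a rigorous version of your sketch would have to reproduce; invoking the Singleton bound on each middle code does not substitute for it.
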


%Algorithm---------
\begin{algorithm}
\caption{For the proof of Lemma \ref{lemma:size_construction_h2} }\label{algo1}
\begin{algorithmic}[1]
\State Let $i_{1}=0, i_{2}=0, I = \emptyset$.
\While{$\exists$ middle set $M_{j}$ such that $M_{j} \nsubseteq I$ and $H(I \cup M_{j}) \leq \lambda$}
	\While{$\exists$ local set $L_{i} \subset M_{j}$ such that  $L_{i} \nsubseteq I$}
		\State $i_{2}=i_{2}+1$
		\State $a_{i_{2}}=H(I \cup L_{i}) - H(I)$
		\State $s_{i_{2}}=a_{i_{2}} + \delta_{2} -1$
		\State $I=\cl(I \cup L_{i})$
	\EndWhile
	\State $i_{1}=i_{1}+1$
	\State $s_{i_{2}}=s_{i_{2}}-\delta_{2} + \delta_{1}$
\EndWhile
\State Let $M_{x}$ a middle set with $M_{x} \nsubseteq I$. 
\While{$\exists$ local set $L_{i} \subset M_{x}$ such that $L_{i} \nsubseteq I$ and $H(I \cup L_{i}) \leq \lambda$}
\State $i_{2}=i_{2}+1$
\State $a_{i_{2}}=H(I \cup L_{i}) - H(I)$
\State $s_{i_{2}}=a_{i_{2}} + \delta_{2}-1$
\State $I=\cl(I \cup L_{i})$
\EndWhile
\end{algorithmic}
\end{algorithm}
%----------------

\begin{proof}
We use the construction given in Algorithm \ref{algo1} to build the set $I_{c}$. In the algorithm, $a_{i_{2}}$ denotes the incremental entropy and $s_{i_{2}}$ is a lower bound on the incremental size. Indeed if $J \subseteq [n]$ and $L_{i}$ is a local set such that $L_{i} \nsubseteq J $ and $H(J \cup L_{i})>H(J)$, then $|\cl(J\cup L_{i})| - |J| \geq H(J \cup L_{i}) - H(J) + \delta_{2} -1$. Similarly, if $J \subseteq [n]$ and $M_{j}$ is a middle set such that $M_{j} \nsubseteq J $ and $H(J \cup M_{j}) >H(J)$, then $|\cl(J\cup M_{j})| - |J| \geq H(J \cup M_{j}) - H(J) + \delta_{1} -1$. 

Denote by $I$ the set obtained at the end of Algorithm \ref{algo1} and by $i_{1}$ and $j_{2}$ the last indices. The total number of local sets visited in this algorithm is $i_{2} \geq \left\lfloor \frac{\lambda}{r_{2}} \right\rfloor$ since the union of $\left\lfloor \frac{\lambda}{r_{2}} \right\rfloor$ arbitrary local sets has an entropy less than $\lambda$. By the same argument, we have $i_{1} \geq \left\lfloor \frac{\lambda}{r_{1}} \right\rfloor$. 

Let $s_{e}=\lambda - H(I)$ and $A \subseteq [n]$ such that $H(A) = s_{e}$ and $I \cap A= \emptyset$ which is always possible to construct. Finally, let $I_{c}=I \cup A$. We prove now that $I_{c}$ has the desired properties. 

By construction, we have $H(I_{c})=\lambda$. For the size, we have
\begin{align*}
|I_{c}| &= s_{e} + |I| \geq s_{e} + \sum\limits_{i=1}^{i_{2}} s_{i} \\
&= s_{e} + \sum\limits_{i=1}^{i_{2}}(a_{i} + \delta_{2} -1) + i_{1}(\delta_{1}-\delta_{2}) \\
&= \lambda + i_{2}(\delta_{2}-1) + i_{1}(\delta_{1}-\delta_{2}) \\
& \geq \lambda + \left\lfloor \frac{\lambda}{r_{2}} \right\rfloor (\delta_{2} -1) + \left\lfloor \frac{\lambda}{r_{1}} \right\rfloor (\delta_{1} - \delta_{2}).
\end{align*}

Hence, $I_{c}$ has the desired entropy and size. \qed
\end{proof}

Thus, we obtain the following bound for 2-level H-LRCs. 

\begin{theorem}
\label{thm:CM_HLRC}
Let $\CC$ be an $[n,k,d]$ H-LRC over $\F_{q}$ with locality $[(r_{1},\delta_{1}),(r_{2},\delta_{2})]$. Then we have
\begin{equation}
\label{eq:CM_HLRC}
k \leq \min_{\lambda \in \Z_{+}} \left\{ \lambda + k_{\opt}^{(q)}(n-\nu, d ) \right\}
\end{equation}
where $\nu = \lambda + \left\lfloor \frac{\lambda}{r_{2}}\right\rfloor (\delta_{2}-1) + \left\lfloor \frac{\lambda}{r_{1}} \right\rfloor  (\delta_{1}-\delta_{2})$ and $k_{\opt}^{(q)}(n,d)$ is the largest possible dimension of a code of length $n$, for a given alphabet size $q$ and minimum distance $d$. 
\end{theorem}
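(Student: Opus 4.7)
The plan is to adapt the standard Cadambe--Mazumdar style argument: combine the set construction of Lemma~\ref{lemma:size_construction_h2} with the shortening bound of Lemma~\ref{lemma:contract}, and then use monotonicity of $k_{\opt}^{(q)}$ to absorb the constructed set into a clean bound on $k$.

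Fix any $\lambda\in\Z_+$. If $\lambda\geq k$, then the bound $k\leq \lambda + k_{\opt}^{(q)}(n-\nu,d)$ holds trivially (since $k_{\opt}^{(q)}\geq 0$), so assume $\lambda<k$. Apply Lemma~\ref{lemma:size_construction_h2} to produce a set $I_c\subseteq[n]$ with
\[
H(I_c)\leq \lambda \quad\text{and}\quad |I_c|\;\geq\;\nu\;=\;\lambda+\left\lfloor\frac{\lambda}{r_2}\right\rfloor(\delta_2-1)+\left\lfloor\frac{\lambda}{r_1}\right\rfloor(\delta_1-\delta_2).
\]
Since $H(I_c)\leq\lambda<k$, Lemma~\ref{lemma:contract} applies to $I_c$, and the shortened code $\CC/I_c$ is a linear code over $\F_q$ with parameters $[n-|I_c|,\,k-H(I_c),\,d']$ for some $d'\geq d$.

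The length $n-|I_c|$ and distance $d'$ of $\CC/I_c$ immediately yield $k-H(I_c)\leq k_{\opt}^{(q)}(n-|I_c|,d')$. Because $k_{\opt}^{(q)}(\cdot,\cdot)$ is non-decreasing in the length and non-increasing in the minimum distance, we can weaken this chain using $|I_c|\geq\nu$ and $d'\geq d$ to get
\[
k - H(I_c) \;\leq\; k_{\opt}^{(q)}(n-|I_c|,\,d')\;\leq\; k_{\opt}^{(q)}(n-\nu,\,d).
\]
Combining with $H(I_c)\leq\lambda$ gives $k\leq \lambda + k_{\opt}^{(q)}(n-\nu,d)$, and taking the minimum over $\lambda\in\Z_+$ establishes \eqref{eq:CM_HLRC}.

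Most of the real content has already been done inside Lemma~\ref{lemma:size_construction_h2}: the delicate combinatorial bookkeeping for the two levels of locality, in particular ensuring that every newly absorbed local set contributes at least $\delta_2-1$ extra coordinates beyond its entropy increment, and every newly absorbed middle set contributes an additional $\delta_1-\delta_2$, is exactly what makes the size bound $|I_c|\geq\nu$ work. Once that construction is in hand, the only subtlety in the theorem itself is the monotonicity argument for $k_{\opt}^{(q)}$ and the handling of the trivial case $\lambda\geq k$; both are routine, so I expect no serious obstacle in this final step.
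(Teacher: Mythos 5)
Your proposal is correct and follows exactly the route the paper intends: the paper's proof of Theorem~\ref{thm:CM_HLRC} simply cites Lemma~\ref{lemma:contract} and Lemma~\ref{lemma:size_construction_h2} ``in the same manner as in \cite{cadambe15},'' and your write-up supplies precisely those Cadambe--Mazumdar steps (shorten by $I_c$, invoke the definition of $k_{\opt}^{(q)}$, use monotonicity in length and distance). Your explicit treatment of the degenerate case $\lambda\geq k$ is a minor but sensible addition that the paper leaves implicit.
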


\begin{proof}
The theorem follows from Lemma \ref{lemma:contract} and \ref{lemma:size_construction_h2} in the same manner as in \cite{cadambe15} by using the estimation of the size coming from Lemma \ref{lemma:size_construction_h2}. 
\qed
\end{proof}

Lemma \ref{lemma:size_construction_h2} can be seen as a proof of concept that we can modify the algorithms in \cite{sasidharan15} to obtain an alphabet-dependent bound. Indeed, the algorithm presented here and the one presented in \cite{sasidharan15} are equivalent in the sense that if $\lambda=k-1$ and $I_{c}$ is the set obtained by the algorithm in the proof of Lemma \ref{lemma:size_construction_h2}, we obtain again the Singleton-type bound \eqref{eq:HLRC_SingletonB} via the relation $d \leq n- |I_{c}|$. This implies that the bound \eqref{eq:CM_HLRC} is at least as good as the Singleton-type bound \eqref{eq:HLRC_SingletonB}.

\subsection{Extension of the bound to arbitrary levels of hierarchy }

In this part, we extend the alphabet-dependent bound to H-LRCs with $h$-level hierarchical locality. The definition and notation of these codes is the following. 

\begin{definition}
An $[n,k,d]$ linear code $\CC$ is a code with \emph{$h$-level hierarchical locality} having locality parameters $[(r_{1}, \delta_{1}), (r_{2}, \delta_{2}), \ldots , (r_{h}, \delta_{h})]$ if for all code symbols $i \in [n]$, there exists a collection of sets $\{L_{1_{i}}, L_{2_{i}}, \ldots , L_{h_{i}} \}$ such that for every $j \in [h]$, we have
\begin{enumerate}
\item $i \in L_{j_{i}} \subseteq [n]$,
\item $H(L_{j_{i}}) \leq r_{j}$
\item The minimum distance of $\CC|_{L_{j_{i}}}$ is at least $\delta_{j}$,
\item $\CC|_{L_{j_{i}}}$ is a code with $(h-j)$-level hierarchical locality having locality 

parameters $[(r_{j+1}, \delta_{j+1}), (r_{j+2}, \delta_{j+2}), \ldots , (r_{h}, \delta_{h})]$.
\end{enumerate}
\end{definition}

A set $L_{j_{i}}$ for $j \in [h]$ is referred to as a \emph{level-$j$} set. It was proven in \cite[Theorem 3.1]{sasidharan15} that any $[n,k,d]$ H-LRC with hierarchical locality $[(r_{1}, \delta_{1}), \ldots , (r_{h}, \delta_{h})]$ satisfies the following Singleton-type bound 

\begin{equation}
\label{eq:HLRC_SingletonB_h}
d \leq n-k+1 - \left\lfloor \frac{k-1}{r_{h}} \right\rfloor (\delta_{h}-1) - \sum\limits_{l=1}^{h-1} \left\lfloor \frac{k-1}{r_{l}} \right\rfloor (\delta_{l}-\delta_{l+1}).
\end{equation}

Obtaining an alphabet-dependent bound for these codes is done in the same manner as in the case of $h=2$. 

\begin{lemma}
\label{lemma:size_construction_h}
Let $\CC$ be an $[n,k,d]$ H-LRC with locality $[(r_{1}, \delta_{1}), \ldots , (r_{h}, \delta_{h})]$ and $\lambda \in \Z$ with $0 \leq \lambda \leq k$. Then, there exists a set $I_{c}$ such that 
\begin{itemize}
\item $H(I_{c}) \leq \lambda$,
\item $|I_{c}| \geq \lambda + \left\lfloor \frac{\lambda}{r_{h}} \right\rfloor (\delta_{h}-1) + \sum\limits_{l=1}^{h-1} \left\lfloor \frac{\lambda}{r_{l}} \right\rfloor (\delta_{l}-\delta_{l+1}) $.
\end{itemize}
\end{lemma}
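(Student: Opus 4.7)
The plan is to directly generalize Algorithm~\ref{algo1} from two levels of hierarchy to $h$ levels, using the same incremental-size accounting. One could equivalently argue by induction on $h$ with Lemma~\ref{lemma:size_construction_h2} as the base case, but then the level-$1$ distance information would have to be re-injected into the inductive step, so describing the $h$-level construction directly is cleaner.

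The construction I propose uses $h$ nested while-loops. The outermost greedily picks a level-$1$ set $L^{(1)}\not\subseteq I$ with $H(I\cup L^{(1)})\le\lambda$; for $1\le j<h$, the level-$j$ loop picks a level-$(j+1)$ set inside the currently chosen level-$j$ set; the innermost (level-$h$) loop increments $i_h$, sets $a_{i_h}=H(I\cup L^{(h)})-H(I)$ and $s_{i_h}=a_{i_h}+\delta_h-1$, and replaces $I$ by $\cl(I\cup L^{(h)})$. Whenever an inner loop exits, so that a level-$j$ set has been fully absorbed into $I$, I increment $i_j$ and upgrade the most recent $s_{i_h}$ by $\delta_j-\delta_{j+1}$. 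When the outermost loop terminates, a cascade of partial phases handles the remaining entropy budget: pick a level-$1$ set not in $I$ that fails the outer entropy check, then look for full level-$2$ subsets with the entropy check, and so on down to level-$h$, mirroring the second phase of Algorithm~\ref{algo1}. Finally pad with $s_e=\lambda-H(I)$ auxiliary independent coordinates to form $I_c$, so that $H(I_c)=\lambda$.

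The size estimate rests on two ingredients. Each level-$h$ set contributes at least $a_{i_h}+\delta_h-1$ new coordinates via the standard shortening argument used in Lemma~\ref{lemma:size_construction_h2}, applied with distance $\delta_h$. The upgrade $\delta_j-\delta_{j+1}$ on closing a level-$j$ set is the same argument applied to the just-completed level-$j$ set (distance $\ge\delta_j$): its final level-$h$ set can be credited with $a+\delta_j-1$ rather than $a+\delta_h-1$, and the increments $\delta_l-\delta_{l+1}$ telescope correctly if several outer levels close at the same step. Summing gives
\[
|I|\ge H(I)+i_h(\delta_h-1)+\sum_{l=1}^{h-1}i_l(\delta_l-\delta_{l+1}),
\]
so $|I_c|\ge\lambda+i_h(\delta_h-1)+\sum_{l=1}^{h-1}i_l(\delta_l-\delta_{l+1})$.

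The remaining step is the counter bound $i_l\ge\lfloor\lambda/r_l\rfloor$ at every level. The partial phase at level $l$ can terminate only once $H(I)>\lambda-r_l$, since otherwise some level-$l$ subset of the current partial enclosing set would still fit under the entropy cap (every code symbol outside $I$ lies in some level-$l$ set by the H-LRC axiom). Combined with $H(I)\le i_l\cdot r_l$ from subadditivity, this yields $i_l>\lambda/r_l-1$ and hence $i_l\ge\lfloor\lambda/r_l\rfloor$. Plugging in gives the claim. The main obstacle I expect is the bookkeeping around the cascaded upgrades when several levels close simultaneously, where one must check that the telescoping $\sum_{l=j}^{h-1}(\delta_l-\delta_{l+1})=\delta_j-\delta_h$ really matches the shortening argument applied directly at level $j$, together with sequencing the partial phases so that each counter lower bound is actually realised.
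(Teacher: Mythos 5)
Your proposal is correct and follows essentially the same route as the paper: the paper's Algorithm~2 is exactly the $h$-level depth-first greedy construction you describe (the paper writes it as a single nested \textbf{while} loop controlled by a level variable $l$ inside a \textbf{for} loop over $j$, rather than $h$ syntactically nested loops, but the traversal and the bookkeeping are the same). The size accounting — each level-$h$ set contributes $a_{i_h}+\delta_h-1$, each closure of a level-$l$ set upgrades by $\delta_l-\delta_{l+1}$, the increments telescope to $\delta_j-\delta_h$ when several levels close at once, the counter bounds $i_l\ge\lfloor\lambda/r_l\rfloor$, and the final padding by $s_e=\lambda-H(I)$ independent coordinates — all match the paper's proof.
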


%Algorithm--------------------------------------
\begin{algorithm}
\caption{For the proof of Lemma \ref{lemma:size_construction_h} }\label{algo2}
\begin{algorithmic}[1]
\State Let $i_{1}=0, i_{2}=0, \ldots , i_{h}=0$; $I=\emptyset$;
\State $M_{0}=[n], M_{1}=\emptyset, M_{2}=\emptyset, \ldots , M_{h}=\emptyset$.
\For{$j\gets 1, h$}
	\While{$\exists$ set $L_{j_{i}}$ such that $L_{j_{i}} \nsubseteq I$ and $H(I \cup L_{j_{i}}) \leq \lambda$}
		\State $M_{j-1}=L_{j_{i}}, M_{j}=L_{j_{i}}, l=j$.
		\While{$\cl(I \cup L_{j_{i}}) \neq I$}
			\If{$\exists$ a level-$l$ set $L_{l_{i}} \subseteq M_{l-1}$ such that $L_{l_{i}} \nsubseteq I$}
				\State $M_{l}=L_{l_{i}}$.
				\If{$l$ equals $h$}
					\State $i_{h}=i_{h}+1$, $a_{i_{h}}=H(I \cup M_{h}) - H(I)$, $s_{i_{h}}=a_{i_{h}} + \delta_{h} -1$.
					\State $I=\cl(I \cup M_{h})$
				\Else
					\State $l=l+1$
				\EndIf
			\Else
				\State $l=l-1$
				\State $i_{l}=i_{l}+1$, $s_{i_{h}}=s_{i_{h}} + \delta_{l} - \delta_{l+1}$.
			\EndIf
		\EndWhile
		\For{$l \gets j, h-1$}
			\State $i_{l}=i_{l}+1$
		\EndFor
		\State $s_{i_{h}}=s_{i_{h}} + \delta_{j} - \delta_{h}$
	\EndWhile
\EndFor
\end{algorithmic}
\end{algorithm}
%---------------------------------------------

\begin{proof}
We use the construction given in Algorithm \ref{algo2} to build the set $I_{c}$. The basic idea is the same as in the proof of Lemma \ref{lemma:size_construction_h2}. First, the algorithm identifies the smallest $j$ such that a level-$j$ set $L_{j_{i}}$ has $H(L_{j_{i}}) \leq \lambda$. This is important when $\lambda < r_{1}$ or during the process of the algorithm when $\lambda-H(I)$ becomes small. After this set is found, the algorithm visits recursively the sets $L_{l_{i}}$ with $l \in [j,h-1]$ such that each of them is contained in the previous one and identifies a level-$(h-1)$ set $L_{(h-1)_{i}}$. Then, it starts adding to $I$ the level-$h$ sets contained in $L_{(h-1)_{i}}$. When every symbol of $L_{(h-1)_{i}}$ has been added to $I$, it steps back one level, finds a new $L_{(h-1)_{i'}}$ that contains new symbols not in $I$, and adds them in the same manner using level-$h$ sets. At some point, the algorithm adds to $I$ a level-$h$ set $L_{h_{i''}}$ containing the last remaining symbols of $L_{j_{i}}$. The second \textbf{while} loop is not satisfied anymore and because the last added set was a level-$h$ set, the algorithm adds one to the count of the visited sets per level between $j$ and $h-1$. Next, it pursues by identifying another level-$j$ set $L_{j_{i'}}$ satisfying $H(I \cup L_{j_{i'}}) \leq \lambda$ or, if all level-$j$ sets exceed $\lambda$, the algorithm searches for valid level-$(j+1)$ sets.

As in the proof of Lemma \ref{lemma:size_construction_h2}, $a_{i_{h}}$ denotes the incremental entropy and $s_{i_{h}}$ is a lower bound on the incremental size corrected accordingly to the level of the added set. The counters $i_{j}$ for $j \in [h]$ are the number of level-$j$ sets visited by the algorithm. 

When the algorithm terminates, the counters are lower-bounded by 
\[
i_{l} \geq \left\lfloor \frac{\lambda}{r_{l}} \right\rfloor, \; 1 \leq l \leq h
\]
since the union of $\left\lfloor \frac{\lambda}{r_{l}} \right\rfloor$ arbitrary level-$l$ sets $L_{j_{l}}$ has an entropy less than $\lambda$. 

Denote by $I$ the set obtained at the end of the algorithm and by $i_{1}, \ldots ,  i_{h}$ the last values of the counters. Let $s_{e}=\lambda-H(I)$ and $A \subseteq [n]$ such that $H(A)=s_{e}$ and $I \cap A =\emptyset$. Finally, let $I_{c}=I \cup A$. We prove now that $I_{c}$ has the desired properties. By construction, we have $H(I_{c})=\lambda$. For the size, we have
\begin{align*}
|I_{c}| & = s_{e} + |I| \geq s_{e} + \sum\limits_{i=1}^{i_{h}} s_{i} \\
& = s_{e} + \sum\limits_{i=1}^{i_{h}}a_{i} + \sum\limits_{i=1}^{i_{h}}(\delta_{h}-1) + i_{h-1}(\delta_{h-1}-\delta_{h}) + i_{h-2}(\delta_{h-2}-\delta_{h-1}) + \ldots + i_{1}(\delta_{1}-\delta_{2}) \\
& = \lambda + i_{h}(\delta_{h}-1) + \sum\limits_{l=1}^{h-1}i_{l}(\delta_{l}-\delta_{l+1}) \\
& \geq \lambda + \left\lfloor \frac{\lambda}{r_{h}} \right\rfloor (\delta_{h}-1) + \sum\limits_{l=1}^{h-1} \left\lfloor \frac{\lambda}{r_{l}} \right\rfloor (\delta_{l}-\delta_{l+1}).
\end{align*}
Hence $I_{c}$ has the desired entropy and size. 
\end{proof}

Thus, we obtain the following extension of the bound \eqref{eq:CM_HLRC} for $h$-level H-LRCs. 

\begin{theorem}
\label{thm:CM_HLRC_h}
Let $\CC$ be an $[n,k,d]$ H-LRC over $\F_{q}$ with locality $[(r_{1}, \delta_{1}), \ldots , (r_{h}, \delta_{h})]$. Then, we have
\begin{equation}
\label{eq:CM_HLRC_h}
k \leq \min_{\lambda \in \Z_{+}} \left\{ \lambda + k_{\opt}^{(q)}(n-\nu, d ) \right\}
\end{equation}
where $\nu = \lambda + \left\lfloor \frac{\lambda}{r_{h}} \right\rfloor (\delta_{h}-1) + \sum\limits_{l=1}^{h-1} \left\lfloor \frac{\lambda}{r_{l}} \right\rfloor (\delta_{l}-\delta_{l+1})$ and $k_{\opt}^{(q)}(n,d)$ is the largest possible dimension of a code of length $n$, for a given alphabet size $q$ and minimum distance $d$.
\end{theorem}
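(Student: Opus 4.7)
The plan is to mimic exactly the proof of Theorem~\ref{thm:CM_HLRC}, feeding the stronger size estimate from Lemma~\ref{lemma:size_construction_h} into the shortening argument of Lemma~\ref{lemma:contract}. Fix an arbitrary $\lambda \in \Z_{+}$ with $0 \leq \lambda \leq k$ (the cases $\lambda > k$ or $\lambda < 0$ either give a vacuous or trivially weaker bound and can be dealt with separately at the end by taking the minimum).

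First I would invoke Lemma~\ref{lemma:size_construction_h} to produce a set $I_{c} \subseteq [n]$ with $H(I_{c}) \leq \lambda$ and $|I_{c}| \geq \nu$, where $\nu = \lambda + \lfloor \lambda/r_{h}\rfloor(\delta_{h}-1) + \sum_{l=1}^{h-1}\lfloor \lambda/r_{l}\rfloor(\delta_{l}-\delta_{l+1})$. Since $H(I_{c}) \leq \lambda \leq k$, I may apply Lemma~\ref{lemma:contract} to the shortened code $\CC/I_{c}$, which has length $n - |I_{c}|$, dimension $k - H(I_{c})$, and minimum distance at least $d$. Monotonicity of $k_{\opt}^{(q)}(\cdot, d)$ in the length then yields
\[
k - \lambda \;\leq\; k - H(I_{c}) \;=\; \dim(\CC/I_{c}) \;\leq\; k_{\opt}^{(q)}(n - |I_{c}|, d) \;\leq\; k_{\opt}^{(q)}(n - \nu, d).
\]
Rearranging gives $k \leq \lambda + k_{\opt}^{(q)}(n - \nu, d)$, and minimizing over $\lambda \in \Z_{+}$ yields \eqref{eq:CM_HLRC_h}.

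There is essentially no obstacle here: Lemma~\ref{lemma:size_construction_h} does the combinatorial work of bundling together level-$1$ through level-$h$ sets to maximize $|I_{c}|$ for a prescribed entropy budget, and Lemma~\ref{lemma:contract} packages the shortening step. The only mildly delicate point is to make sure that $k_{\opt}^{(q)}(n - |I_{c}|, d) \leq k_{\opt}^{(q)}(n - \nu, d)$, which is immediate from the fact that puncturing a code of length $n - |I_{c}|$ to length $n - \nu$ preserves the dimension and can only decrease the distance (equivalently, $k_{\opt}^{(q)}(\cdot, d)$ is non-decreasing in its first argument). This is exactly the same structural argument as in \cite{cadambe15} and in Theorem~\ref{thm:CM_HLRC}, and no new ingredient is required beyond replacing the $h = 2$ size estimate by its $h$-level counterpart.
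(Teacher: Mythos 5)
Your proof is correct and matches the paper's (implicit) proof exactly: the paper itself does not spell out a separate argument for Theorem~\ref{thm:CM_HLRC_h}, but states that it follows in the same manner as Theorem~\ref{thm:CM_HLRC}, which in turn combines Lemma~\ref{lemma:contract} with the set construction of Lemma~\ref{lemma:size_construction_h2}; you have simply substituted Lemma~\ref{lemma:size_construction_h} in the obvious way. One small slip in your justification of monotonicity: since $|I_c| \geq \nu$, one has $n - |I_c| \leq n - \nu$, so ``puncturing a code of length $n-|I_c|$ to length $n-\nu$'' goes in the wrong direction (puncturing decreases length); the correct elementary reason is that a code of length $n-|I_c|$ can be \emph{extended} to length $n-\nu$ by appending zero coordinates, preserving both dimension and minimum distance, which shows $k_{\opt}^{(q)}(\cdot,d)$ is non-decreasing in the length --- the conclusion you need is unchanged.
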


As in the case $h=2$, for $\lambda=k-1$ and $I_{c}$ the set obtained by Algorithm \ref{algo2}, we obtain again the Singleton-type bound \eqref{eq:HLRC_SingletonB_h} via the relation $d \leq n-|I_{c}|$, which shows that the bound \eqref{eq:CM_HLRC_h} is at least as good as the Singleton-type bound \eqref{eq:HLRC_SingletonB_h}. Moreover, the bound \eqref{eq:CM_HLRC_h} yields that H-LRCs achieving any bound on the parameters $[n,k,d]$ only are directly alphabet-optimal H-LRCs by setting $\lambda=0$.

\section{Hierarchical locality of $\Ss(m)-\Ss(s)$}
\label{sec:Sm}
%Input 4
%Study of locality and H-locality of S(m)-S(s)

%Intro

In \cite{silberstein18}, the authors presented four different constructions of linear LRCs with small locality and high availability. The constructions are based on a method developed in \cite{farrell70} where the generator matrix of a code is obtained by deleting specific columns from the generator matrix of a Simplex code. In this section, we are interested in the locality of one particular construction in \cite{silberstein18} where the deleted columns form again a Simplex code. This construction is highly combinatorial and yields optimal codes achieving the Griesmer bound. The objective is to describe the locality parameters for $\delta>2$ and all dimensions. We show that these codes are locally repairable codes for every dimension implying a complete optimization of the number of nodes contacted for repair according to the number of failures. Moreover, using combinatorial techniques, we establish the complete list of possible locality and show how the local sets can be arranged to form a hierarchical locality. Finally, we prove that these codes are optimal LRCs for all localities and alphabet-optimal H-LRCs by the new bound \eqref{eq:CM_HLRC_h}. 

%Natalia's construction and results

We start by formally defining the construction of these linear LRCs.  

\begin{construction}[\hspace{1sp}\cite{silberstein18} Construction \textrm{IV}]
\label{cst:Sm_Ss}
%hspace to remove the extra space from \cite command. 
Let $G_{m}$ be an $m \times \frac{q^{m}-1}{q-1}$ generator matrix of the Simplex code $\Ss(m)$ and $G_{s}$ an $s \times \frac{q^{s}-1}{q-1}$ generator matrix of the Simplex code $\Ss(s)$ with $s\leq m$. Let $G'_{s}$ be the generator matrix obtained by prepending $m-s$ zeros to every column of $G_{s}$. Let $G_{\CC}$ be the $m \times \frac{q^{m}-q^{s}}{q-1}$ matrix obtained by deleting the $\frac{q^{s}-1}{q-1}$ columns of $G'_{s}$ from $G_{m}$. Then $G_{\CC}$ generates a linear code $\CC$ over $\F_{q}$ denoted by $\Ss(m)-\Ss(s)$. 
\end{construction}

It was proven in \cite[Theorem 14]{silberstein18} that the code $\Ss(m)-\Ss(s)$ with $m \geq 3$ and $s \in [2,m-1]$ is a $[\frac{q^{m}-q^{s}}{q-1}, m, q^{m-1}-q^{s-1}]$ linear LRC over $\F_{q}$ with locality $(r=2,\delta=2)$ if $q>2$ or if $q=2$ and $s<m-1$, and with locality $(r=3,\delta=2)$ when $q=2$ and $s=m-1$. Moreover, $\Ss(m)-\Ss(s)$ achieves the Griesmer bound by \cite[Lemma 16]{silberstein18}. Notice that the code $\Ss(m)-\Ss(m-1)$ is isomorphic to the Reed--Muller code RM$(1,m-1)$. 

The following example illustrates Construction \ref{cst:Sm_Ss}. 

\begin{example}
\label{ex:1246}
Let $G_{4}$ and $G_{2}$ be the generator matrices of the binary Simplex codes $\Ss_{2}(4)$ and $\Ss_{2}(2)$ respectively. Then $\CC = \Ss_{2}(4) - \Ss_{2}(2)$ is a binary $[12,4,6]$ code generated by the matrix
\begin{center}
\begin{tikzpicture}[
    every left delimiter/.style={xshift=.4em},
    every right delimiter/.style={xshift=-.4em},
    ]
    \matrix[
        matrix of math nodes,
        row sep=.01ex,
        column sep=.01ex,
        inner sep=2.5pt,
        left delimiter=(,right delimiter=)
        %nodes={text width=.3em, text height=0.75ex, text depth=.2ex, align=center}
        ] (m) 
        {
		1&0&0&0&1&1&1&0&0&0&1&1&1&0&1\\
		0&1&0&0&1&0&0&1&1&0&1&1&0&1&1\\
		0&0&1&0&0&1&0&1&0&1&1&0&1&1&1\\
		0&0&0&1&0&0&1&0&1&1&0&1&1&1&1\\
        };
        \begin{scope}[on background layer]
            \node[inner sep=0.5pt, fit=(m-1-3)(m-4-3), draw=gray, fill=gray, rounded corners] {};
            \node[inner sep=0.5pt, fit=(m-1-4)(m-4-4), draw=gray, fill=gray, rounded corners] {};
            \node[inner sep=0.5pt, fit=(m-1-10)(m-4-10), draw=gray, fill=gray, rounded corners] {};
        \end{scope} 
\end{tikzpicture}
\end{center}
where the shadowed columns $3,4$, and $10$ are deleted. 

\end{example}

\subsection{Locality of $\Ss(m)-\Ss(s)$ with dimension $m-1$}

The goal of this subsection is to obtain the locality of $\Ss(m)-\Ss(s)$ with a dimension of $m-1$. For this, we make a detour to matroid theory by studying the relation between the hyperplanes of the matroid associated to the Simplex code $\Ss(m)$ and the hyperplanes of the matroid associated to $\Ss(m)-\Ss(s)$. Indeed, the Simplex code has intrinsically a lot of useful combinatorial structures and Construction \ref{cst:Sm_Ss} corresponds to a deletion in matroid theory. Therefore, matroid theory is used here as a tool to understand the closed sets of $\Ss(m)-\Ss(s)$ of dimension $m-1$ and to construct the local set for every code symbol. We start by presenting a lemma that gives the relation between a flat and the hyperplanes of the matroid associated to the Simplex code $\Ss(m)$. 

\begin{lemma}
\label{lemma:coatom_Y}
Let $M_{\Ss}$ be the matroid associated to the Simplex code $\Ss(m)$ and $Y \in \FF(M_{\Ss})$ a flat with $\rho(Y)\leq m-1$. Then, for all hyperplanes $H \in \HH(M_{\Ss})$ either $Y \subseteq H$ or $H \cap Y \lessdot Y$.
\end{lemma}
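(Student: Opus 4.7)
The plan is to translate the statement into linear algebra via the standard bijection between the flats of $M_{\Ss}$ and the subspaces of $\F_{q}^{m}$. Since the ground set $E$ of $M_{\Ss}$ is a complete system of representatives of the one-dimensional subspaces of $\F_{q}^{m}$, a subset $F \subseteq E$ is a flat of $M_{\Ss}$ if and only if $F$ is exactly the set of projective points lying in some $\F_{q}$-subspace $V_{F} \subseteq \F_{q}^{m}$, and then $\rho(F) = \dim_{\F_{q}} V_{F}$. Moreover, the intersection of flats corresponds to the intersection of subspaces, \ie{} $V_{F_{1} \cap F_{2}} = V_{F_{1}} \cap V_{F_{2}}$, which is consistent with the meet operation on $\FF(M_{\Ss})$ recalled in the preliminaries.

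Applying this correspondence to the flat $Y$ (with $\dim V_{Y} = \rho(Y) \leq m-1$) and the hyperplane $H$ (with $\dim V_{H} = m-1$), there are two cases. If $V_{Y} \subseteq V_{H}$, then $Y \subseteq H$ and we are done. Otherwise, since $V_{H}$ has codimension $1$, any vector of $V_{Y}$ outside $V_{H}$ forces $V_{Y} + V_{H} = \F_{q}^{m}$, and the Grassmann dimension formula gives
\[
\dim (V_{Y} \cap V_{H}) = \dim V_{Y} + (m-1) - m = \rho(Y) - 1.
\]
Translating back, $\rho(H \cap Y) = \rho(Y) - 1$ and $H \cap Y \subsetneq Y$.

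To conclude $H \cap Y \lessdot Y$, I would invoke the following elementary matroid fact: if $F_{1} \subsetneq F_{2}$ are flats with $\rho(F_{2}) = \rho(F_{1}) + 1$, then $F_{1} \lessdot F_{2}$. Indeed, any intermediate flat $F_{3}$ with $F_{1} \subsetneq F_{3} \subsetneq F_{2}$ would have integer rank satisfying $\rho(F_{1}) \leq \rho(F_{3}) \leq \rho(F_{1}) + 1$; if $\rho(F_{3}) = \rho(F_{1})$ then any $e \in F_{3} \setminus F_{1}$ lies in $\cl(F_{1}) = F_{1}$, and symmetrically if $\rho(F_{3}) = \rho(F_{2})$, yielding a contradiction in both cases. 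Applying this to $F_{1} = H \cap Y$ and $F_{2} = Y$ finishes the proof.

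There is no serious obstacle: the entire argument is a one-line Grassmann dimension count together with the bookkeeping that the matroid of $\Ss(m)$ is the projective geometry $PG(m-1,q)$. The only care needed is to make the flat/subspace correspondence explicit so that the intersection of the matroid flats $H$ and $Y$ is genuinely identified with the intersection of the corresponding $\F_{q}$-subspaces.
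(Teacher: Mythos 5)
Your proof is correct. Both you and the paper begin by identifying $\FF(M_{\Ss})$ with the lattice of subspaces of $\F_q^m$, so the underlying structure is the same; the difference lies in the final step. The paper observes that $Y \nsubseteq H$ forces $H \vee Y = E$ and then applies the modular-lattice isomorphism (diamond) theorem, so that the two-element interval $[H,E]$ is carried isomorphically onto $[H \cap Y, Y]$, giving $H \cap Y \lessdot Y$ in one stroke. You instead compute $\rho(H \cap Y) = \rho(Y) - 1$ directly via the Grassmann dimension formula and then apply the elementary matroid fact, which you also prove, that nested flats with a rank gap of one form a cover. Your route is more self-contained and has the side benefit of recording the exact rank of $H \cap Y$, a fact the paper re-derives and uses later when it computes $|H \cap Y| = \frac{q^{s-1}-1}{q-1}$; the paper's route is shorter once the modular isomorphism theorem is taken as known.
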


\begin{proof}
Suppose $Y \nsubseteq H$. Then $H \vee Y = E$. Now $\FF(M_{\Ss})$ is isomorphic to the lattice of linear spaces of $\F_{q}^{m}$. Therefore $\FF(M_{\Ss})$ is a modular lattice and the intervals $[H, H \vee Y]_{\FF(M_{\Ss})}$ and $[H \cap Y, Y]_{\FF(M_{\Ss})}$ are two isomorphic sublattices. Since $[H,H \vee Y]_{\FF(M_{\Ss})} = \{ H, E \}$, then $[H \cap Y, Y]_{\FF(M_{\Ss})} = \{ H \cap Y, Y \}$ and $H \cap Y \lessdot Y$. 
\qed
\end{proof}

Because the extended $\Ss(s)$ in Construction \ref{cst:Sm_Ss} is a closed set, we use the previous lemma applied to the extended $\Ss(s)$ to understand the hyperplanes of $\Ss(m)-\Ss(s)$ via the hyperplanes of $\Ss(m)$ and the deletion corresponding to removing the columns in the generator matrix of $\Ss(m)$. 

\begin{proposition}
\label{prop:surj_map}
Let $M_{\Ss}=(E_{\Ss}, \rho_{\Ss})$ be the matroid associated to the Simplex code $\Ss(m)$ and $M_{\CC}=(E_{\CC}, \rho_{\CC})$ the matroid associated to the code $\CC = \Ss(m)-\Ss(s)$. Let also $Y \subseteq E_{\Ss}$ be such that $M_{\Ss} \setminus Y = M_{\CC}$. Then, the map 
\begin{align*}
\phi : \{ H \in \HH(M_{\Ss}) \text{ with } Y \neq H\}
& \to \HH(M_{\CC}) \\
 H & \mapsto H - H \cap Y
\end{align*}
is a bijection. 
\end{proposition}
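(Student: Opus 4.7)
The plan is to check three properties of $\phi$: well-definedness (that the image lies in $\HH(M_{\CC})$), injectivity, and surjectivity. The main technical tool will be the standard identity $\cl_{\CC}(X) = \cl_{\Ss}(X) \cap E_{\CC}$ for $X \subseteq E_{\CC}$, which translates closure in the deleted matroid into closure in the ambient one, together with the disjoint decomposition $H = (H \cap E_{\CC}) \sqcup (H \cap Y)$ for any $H \subseteq E_{\Ss}$.

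For well-definedness, I would first apply Lemma~\ref{lemma:coatom_Y} to the flat $Y$ (of rank $s \leq m-1$) and an arbitrary hyperplane $H$ of $M_{\Ss}$ with $H \neq Y$. When $Y \subseteq H$, the condition $H \neq Y$ forces $s < m-1$ and so $Y \subsetneq H$; when $Y \nsubseteq H$, the lemma gives $H \cap Y \lessdot Y$, so $\rho_{\Ss}(H \cap Y) = s - 1 < m - 1$. In either case $H \cap Y$ is a proper subflat of $H$. The crux is then to show $\rho_{\Ss}(H \cap E_{\CC}) = m - 1$. I would argue by contradiction: if $\cl_{\Ss}(H \cap E_{\CC})$ were strictly smaller than $H$, then $H$ would be the set-theoretic union of the two proper subflats $\cl_{\Ss}(H \cap E_{\CC})$ and $H \cap Y$, which is impossible in the projective-geometry lattice $\FF(M_{\Ss})$ by the classical fact that a finite-dimensional vector space is never the union of two proper subspaces. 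Consequently $H \cap E_{\CC} = \cl_{\CC}(H \cap E_{\CC})$ has rank $m-1$, so $\phi(H) \in \HH(M_{\CC})$.

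Injectivity will be immediate: if $\phi(H_1) = \phi(H_2)$, then $H_1 \cap E_{\CC} = H_2 \cap E_{\CC}$, and applying $\cl_{\Ss}$ to both sides gives $H_1 = H_2$ since $\cl_{\Ss}(H_i \cap E_{\CC}) = H_i$ by the rank step above. For surjectivity, given $H' \in \HH(M_{\CC})$, I would set $H := \cl_{\Ss}(H')$. Then $H$ is a flat of $M_{\Ss}$ of rank $\rho_{\Ss}(H') = \rho_{\CC}(H') = m-1$, hence a hyperplane; it satisfies $H \neq Y$ because $H \supseteq H' \neq \emptyset$ while $H' \subseteq E_{\CC}$ is disjoint from $Y$; and finally $\phi(H) = H \cap E_{\CC} = \cl_{\Ss}(H') \cap E_{\CC} = \cl_{\CC}(H') = H'$.

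The hard part will be the rank step in well-definedness. Once the union-of-subflats observation is in place, the remainder reduces to straightforward bookkeeping with the closure operator, but one must keep the two cases of Lemma~\ref{lemma:coatom_Y} separate and carefully invoke the closure identity $\cl_{\CC}(X) = \cl_{\Ss}(X) \cap E_{\CC}$ when moving between the two matroids.
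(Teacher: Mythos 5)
Your proof is correct, and the approach to well-definedness is genuinely different from the paper's. Where you invoke the projective-geometry fact that a vector space (equivalently, a rank-$(m-1)$ flat of $M_{\Ss}$) cannot be the set-theoretic union of two proper subspaces---applied to the decomposition $H = (H \cap E_{\CC}) \cup (H \cap Y)$ with $H\cap Y$ a proper subflat via Lemma~\ref{lemma:coatom_Y}---the paper instead argues via coding theory: it computes $|H\cap Y|$ in each case, observes that this is strictly smaller than the minimum distance $q^{m-2}$ of the restriction $\Ss(m)|_H \cong \Ss(m-1)$, and concludes that deleting $H\cap Y$ cannot drop the rank. Your argument is more lattice-theoretic and in particular does not depend on knowing the minimum distance of the hyperplane restriction, whereas the paper's argument is more elementary but specific to the Simplex setting. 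For surjectivity you also diverge: you construct the preimage of $H' \in \HH(M_{\CC})$ explicitly as $\cl_{\Ss}(H')$ and verify directly that it is a hyperplane distinct from $Y$ mapping to $H'$, whereas the paper argues more indirectly that every flat of $M_{\CC}$ is of the form $F \cap E_{\CC}$ for a flat $F$ of $M_{\Ss}$ and that $F$ must have rank at least $m-1$ whenever $F\cap E_{\CC}$ does. Both approaches lean on the same standard identity relating flats and closure of the deletion to those of the ambient matroid, so this difference is mostly one of presentation; your version is a bit tighter. The injectivity step is essentially the same in both. One small expository nit: in the case $Y\subseteq H$, the properness $Y\subsetneq H$ follows directly from $Y\subseteq H$ and $H\neq Y$; the observation that $s<m-1$ is a consequence, not a cause, but this does not affect the argument.
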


\begin{proof}
By construction, the image of $\phi$ is a subset of $2^{E_{\CC}}$. Firstly, we prove that $\HH(M_{\CC})$ is contained in the image of $\phi$. Secondly, we prove that $\phi$ is well-defined which implies with the first part that $\phi$ is a surjection. Finally, we prove that $\phi$ is an injection. 

%This implies that the map produces flats and that {coatom of M_{\CC}} \subseteq image of phi. 
The flats of $M_{\CC}$ can be obtained by the flats of $M_{\Ss}$ via the relation
\[
\FF(M|(E-Y)) = \{ F \cap (E-Y) : F \in \FF(M)  \}
\]
and $F \cap (E-Y) = F-Y = F-F \cap Y$. 

Therefore, we have that $\phi(H) \in \FF(M_{\CC})$. Moreover, since $\rho_{\CC}(E_{\CC})=m$, all hyperplanes of $M_{\CC}$ have a rank equal to $m-1$. Combining this with the fact that $\rho_{\CC}(F-F\cap Y) = \rho_{\Ss}(F-F\cap Y) \leq \rho_{\Ss}(F)$ implies that if $H_{\CC} \in \HH(M_{\CC})$, then there exists $H \in \HH(M_{\Ss})$ such that $H_{\CC} = H - H \cap Y$. Thus $\HH(M_{\CC})$ is contained in the image of $\phi$. 

%The second part consists of proving that rho(phi(Z^{c}))=m-1 proving that the map phi is well defined (no other flats are produces by phi than coatom of M_{\CC}. 

Let $H \in \HH(M_{\Ss})$ and let $H_{\CC} = \phi(H)$ a set. To show that $\phi$ is well-defined, it is enough to prove that $\rho_{\CC}(\phi(H))=m-1$, since $\phi(H) \in \FF(M_{\CC})$. For this, we use an argument from coding theory. Since the restricted code $\Ss(m)|_{H}$ is isomorphic to the Simplex code $\Ss(m-1)$, its minimum distance is equal to $q^{m-2}$. By construction of $\CC$ and $M_{\CC}$, the size of $Y$ is $|Y| = \frac{q^{s}-1}{q-1}$. We distinguish two cases.

Assume that $s<m-1$. Let $M_{H} = M_{\Ss}|H$. By restriction and deletion definitions, we have 
\[
\rho_{\CC}(H_{\CC}) = \rho_{\Ss}(H_{\CC}) = \rho_{M_{H}}(H_{\CC}).
\]
Let $d_{M_{H}} = q^{m-2}$ be the minimum distance of $M_{H}$. We have
\[
d_{M_{H}} = \min \{ |A| : A \subseteq H \text{ and } \rho_{M_{H}}(H - A) < \rho_{M_{H}}(H) \}.
\]
Since $|H \cap Y| \leq |Y| =\frac{q^{s}-1}{q-1} < q^{m-2}$, we have $\rho_{M_{H}}(H - H \cap Y) = \rho_{M_{H}}(H)=m-1$. Thus, $\rho_{\CC}(H_{\CC}) = m-1$. 

Assume now that $s=m-1$. Then, there is a unique hyperplane of $M_{\Ss}$ that contains $Y$ which is $Y$ itself. Let $H \in \HH(M_{\Ss})$ with $H \neq Y$. By Lemma \ref{lemma:coatom_Y}, we have $H \cap Y \lessdot Y$. Since the restricted code $\Ss(m)|_{H \cap Y}$ is isomorphic to the Simplex code $\Ss(s-1)$, we have $|H \cap Y| = \frac{q^{s-1}-1}{q-1}$. By using the same argument on the minimum distance of $M_{H}$, we have 
\[
\rho_{\CC}(H - H \cap Y) = \rho_{M_{H}}(H) = m-1.
\]

Hence, $\phi$ is well-defined and since the image of $\phi$ contains $\HH(M_{\CC})$, $\phi$ is a surjection. 

To prove the injection, suppose that $\phi(H_{1})=\phi(H_{2})$ for $H_{1}, H_{2} \in \HH(M_{\Ss})$ with $Y \neq H_{1}, H_{2}$. Then, $H_{1} - H_{1} \cap Y = H_{2} - H_{2} \cap Y$. Now we look at these sets in $M_{\Ss}$. Since we showed that $\rho_{\Ss}(H_{1} - H_{1} \cap Y) = m-1$, we have that $\cl(H_{1} - H_{1} \cap Y) = H_{1}$ because $H_{1}$ is a hyperplane containing $H_{1} - H_{1} \cap Y$. Similarly, we have $\cl(H_{2} - H_{2} \cap Y) = H_{2}$. Hence $H_{1} = H_{2}$ and $\phi$ is an injection. 
\qed
\end{proof}

The map in Proposition \ref{prop:surj_map} gives us the relation between the hyperplanes of $M_{\CC}$ and the hyperplanes of the matroid $M_{\Ss}$. We can now completely describe the restrictions of $\CC$ to hyperplanes and see that these restrictions are in fact isomorphic to certain codes obtained by Construction \ref{cst:Sm_Ss}. 

\begin{proposition}
\label{prop:coatom_isom}
Let $\CC$ be the code $\CC= \Ss(m)-\Ss(s)$, $M_{\CC}$ the matroid associated to $\CC$, and $H_{\CC} \in \HH(M_{\CC})$. Then $\CC|_{H_{\CC}}$ is either isomorphic to the code $\Ss(m-1)-\Ss(s-1)$ or to $\Ss(m-1)-\Ss(s)$.
\end{proposition}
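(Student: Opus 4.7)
The plan is to lift $H_\CC$ back to $M_{\Ss}$ via Proposition \ref{prop:surj_map}, then understand $\CC|_{H_\CC}$ as a deletion inside a smaller Simplex code. By Proposition \ref{prop:surj_map}, there is a unique hyperplane $H \in \HH(M_{\Ss})$ with $H \neq Y$ and $H_{\CC} = H - H \cap Y$, where $Y$ is the flat of $M_{\Ss}$ deleted in Construction \ref{cst:Sm_Ss}. Since the columns of $G_m$ are representatives of all one-dimensional subspaces of $\F_q^m$, the columns indexed by $H$ are exactly representatives of all one-dimensional subspaces of an $(m-1)$-dimensional subspace, so after a column permutation the submatrix of $G_m$ indexed by $H$ is a generator matrix of $\Ss(m-1)$; in particular $\Ss(m)|_H \cong \Ss(m-1)$. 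Hence $\CC|_{H_\CC}$ is obtained from $\Ss(m-1)$ by deleting the coordinates corresponding to $H \cap Y$.

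Next I apply Lemma \ref{lemma:coatom_Y} to the flat $Y$ (of rank $s$) and the hyperplane $H$. Since $H \neq Y$, exactly one of the following occurs. Either $Y \subseteq H$, in which case $H \cap Y = Y$ is a flat of rank $s$ inside $H$; or $Y \not\subseteq H$, in which case $H \cap Y \lessdot Y$ in $\FF(M_{\Ss})$, so $H \cap Y$ is a flat of rank $s-1$. In both cases $H \cap Y$ is a flat of $M_{\Ss}$, hence also a flat of the restricted matroid $M_{\Ss}|H$. Let $s' \in \{s-1,s\}$ denote its rank; inside $\Ss(m-1)$ it corresponds to an $s'$-dimensional subspace $V'$ of the underlying $(m-1)$-dimensional ambient space, and $\CC|_{H_\CC}$ is the code obtained from $\Ss(m-1)$ by deleting the columns lying in $V'$.

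The main obstacle is to identify this deletion with Construction \ref{cst:Sm_Ss} applied to $(m-1,s')$: Construction \ref{cst:Sm_Ss} deletes the specific $s'$-dimensional subspace spanned by the last $s'$ standard basis vectors, while in our setting $V'$ is an arbitrary $s'$-dimensional subspace. The key observation is that $GL_{m-1}(\F_q)$ acts transitively on the set of $s'$-dimensional subspaces, so there exists an invertible matrix $T$ sending $V'$ to $\langle e_{m-s'},\dots,e_{m-1}\rangle$. Left-multiplying a generator matrix of $\Ss(m-1)$ by $T$ preserves the code up to isomorphism, and permutes its columns (each a one-dimensional subspace representative, up to scalars which can be absorbed by column scaling) onto representatives of the columns of $G_{m-1}$ after the relabelling; this permutation carries the deleted coordinates onto the coordinates removed in Construction \ref{cst:Sm_Ss}. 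Consequently $\CC|_{H_\CC} \cong \Ss(m-1) - \Ss(s')$, completing the proof with $s' = s$ in the first case and $s' = s-1$ in the second.
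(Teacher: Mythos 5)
Your proof is correct and follows essentially the same route as the paper: lift $H_{\CC}$ to a hyperplane $H$ of $M_{\Ss}$ via Proposition~\ref{prop:surj_map}, observe that $\Ss(m)|_{H}\cong\Ss(m-1)$, use Lemma~\ref{lemma:coatom_Y} to determine that $H\cap Y$ is a flat of rank $s$ or $s-1$, and then normalize the deleted subspace to the canonical one from Construction~\ref{cst:Sm_Ss}. The only difference is cosmetic: the paper phrases the normalization as ``suitable row operations on $G_{\Ss(m-1)}$'' after transporting $Y$ through a rank-preserving bijection of ground sets, whereas you phrase it as transitivity of the $GL_{m-1}(\F_{q})$ action on $s'$-dimensional subspaces; these are the same operation. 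You are also slightly more explicit than the paper in acknowledging that the induced column permutation is only determined up to scalars, which is worth keeping since the paper's stated notion of code isomorphism allows only coordinate permutations and the identification genuinely uses monomial equivalence (or, equivalently, matroid isomorphism as the paper invokes).
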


\begin{proof}
Let $M_{\Ss}$ be the matroid associated with $\Ss(m)$ and $Y\in \FF(M_{\Ss})$ such that $M_{\CC}=M_{\Ss} \setminus~Y$. By Proposition \ref{prop:surj_map}, there exists $H \in \HH(M_{\Ss})$ such that $H_{\CC} = H - H \cap Y$. We prove the isomorphism by distinguishing two cases depending on Lemma \ref{lemma:coatom_Y}. 

Suppose that $Y \subseteq H$. By restriction properties, we have 
\[
M_{\CC}| H_{\CC} = M_{\Ss} \setminus Y | H_{\CC} = M_{\Ss} \setminus Y | (H -Y) = M_{\Ss}|H \setminus Y.
\] 

By construction of the Simplex code, the matroid $M_{\Ss}|H $ is isomorphic to $M(\Ss(m-1))$, the matroid associated to the Simplex code $\Ss(m-1)$. Since $Y \in \FF(M_{\Ss}|H)$, we have that $(M_{\Ss}|H) | Y = M_{\Ss} | Y$ is isomorphic to $M(\Ss(s))$. 

Let $E_{H}$ and $E_{\Ss(m-1)}$ be the ground sets of $M_{\Ss}|H $ and $M(\Ss(m-1))$ respectively. Then, there exists a bijection $\varphi: E_{H} \to E_{\Ss(m-1)}$ that preserves the rank. Let $G_{\Ss(m-1)}$ be a matrix associated to $M(\Ss(m-1))$, $\{ \mathbf{g}_{i} : i \in E_{\Ss(m-1)} \}$ the set of columns of $G_{\Ss(m-1)}$ and $\hat{Y}= \{ \varphi(e) : e \in Y \} $. Then $M_{\Ss}|H \setminus Y$ is isomorphic to $M(G_{\Ss(m-1)} - \{ \mathbf{g}_{i} : i \in \hat{Y} \})$. Since the submatrix formed by the columns indexed in $\hat{Y}$ has a rank equal to $s$, we can perform some suitable row operations on $G_{\Ss(m-1)}$ to transform the columns $\mathbf{g}_{i}$ with $i \in \hat{Y}$ such that they are of the form $\tilde{\mathbf{g}}_{i}=(0,0,\ldots, 0, \tilde{g}_{i_{1}}, \ldots, \tilde{g}_{i_{s}}) \in \F_{q}^{m-1}$. Let $\tilde{G}_{\Ss(m-1)}$ be the matrix obtained after the row operations. Then the matrix $\tilde{G}_{\Ss(m-1)} - \{ \tilde{\mathbf{g}}_{i} : i \in \hat{Y} \}$ is exactly the generator matrix of the code $\Ss(m-1) - \Ss(s)$. By the previous isomorphism, we have $M(\tilde{G}_{\Ss(m-1)} - \{ \tilde{\mathbf{g}}_{i} : i \in \hat{Y} \}) \approx M(G_{\Ss(m-1)} - \{ \mathbf{g}_{i} : i \in \hat{Y} \}) \approx M_{\Ss}|H \setminus Y \approx M_{\CC}|H_{\CC}$. Hence we have indeed that $C|_{H_{\CC}}$ is isomorphic to  $\Ss(m-1)-\Ss(s)$. 

For the other case, assume now that  $Y \nsubseteq H$. By Lemma \ref{lemma:coatom_Y}, we have that $H_{\CC}=H - H \cap Y$ with $H \cap Y \subsetneq Y$. Now this case follows the previous case in a similar manner by replacing $Y$ by $H \cap Y$ and $s$ by $s-1$ since $M_{\Ss}|(H \cap Y)$ is isomorphic to $\Ss(s-1)$. Therefore the same type of isomorphisms yields that $C|_{H_{\CC}}$ is isomorphic to  $\Ss(m-1)-\Ss(s-1)$. 
\qed
\end{proof}

It remains to show the existence of such closed sets for every code symbol in order to prove that the code $\Ss(m)-\Ss(s)$ is an LRC with locality obtained by restrictions to closed sets of dimension $m-1$.

\begin{theorem}
\label{thm:local_isom}
Let $\CC$ be the linear code $\CC=\Ss(m)-\Ss(s)$ of length $n$ with $m \geq 2$ and $s\geq 0$. Then, for all $e \in [n]$ we have the following. 
\begin{itemize}
\item If $s=0$, there is a set $H \subseteq [n]$ containing $e$ such that $\CC|_{H}$ is isomorphic to $\Ss(m-1)$. 
\item If $s=m-1$, there is a set $H \subseteq [n]$ containing $e$ such that $\CC|_{H}$ is isomorphic to $\Ss(m-1)-\Ss(m-2)$. 
\item If $1 \leq s \leq m-2$, there exist two sets $H_{1}, H_{2}$ containing $e$ with $H_{1} \nsubseteq H_{2}$ such that $\CC|_{H_{1}}$ is isomorphic to $\Ss(m-1)-\Ss(s)$ and $\CC|_{H_{2}}$ is isomorphic to $\Ss(m-1)-\Ss(s-1)$. 
\end{itemize}
\end{theorem}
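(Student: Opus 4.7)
The plan is to pull the question back from $M_\CC$ to the matroid $M_\Ss = M_{\Ss(m)}$ via the bijection $\phi$ of Proposition~\ref{prop:surj_map}, thereby reducing each existence claim to a short projective-geometry statement about hyperplanes of $M_\Ss$ through a prescribed point. Write $E_\CC = E_\Ss \setminus Y$, where $Y$ is the rank-$s$ closed set removed in Construction~\ref{cst:Sm_Ss}. For $e \in [n] = E_\CC$ we have $e \notin Y$, and for any hyperplane $H \in \HH(M_\Ss)$ with $H \neq Y$, the element $e$ lies in $\phi(H) = H - H \cap Y$ precisely when $e \in H$. By Proposition~\ref{prop:coatom_isom}, $\CC|_{\phi(H)}$ is isomorphic to $\Ss(m-1)-\Ss(s)$ when $Y \subseteq H$ and to $\Ss(m-1)-\Ss(s-1)$ when $Y \nsubseteq H$. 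So the theorem reduces to exhibiting, for each $e$, suitable hyperplanes of $M_\Ss$ through $e$.

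The boundary cases are immediate. If $s=0$, then $Y = \emptyset$ and $\CC = \Ss(m)$, so any hyperplane of $M_\Ss$ through $e$ restricts to $\Ss(m-1)$. If $s=m-1$, then $Y$ is itself a hyperplane of $M_\Ss$; since $e \notin Y$, any hyperplane through $e$ differs from $Y$ and fails to contain $Y$, so its image under $\phi$ yields the required $\CC|_H \cong \Ss(m-1)-\Ss(m-2)$.

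The substantive case is $1 \leq s \leq m-2$, where we need two hyperplanes $H_1, H_2$ with $H_1 \nsubseteq H_2$. For $H_1$, observe that $\cl(Y \cup \{e\})$ has rank $s+1 \leq m-1$ (since $Y$ is closed and $e \notin Y$), so it is contained in some hyperplane $H^{(1)} \in \HH(M_\Ss)$; then $H_1 = \phi(H^{(1)}) \ni e$ and, thanks to $Y \subseteq H^{(1)}$, the restricted code is isomorphic to $\Ss(m-1)-\Ss(s)$. For $H_2$, recall that in the Simplex matroid every singleton is closed (distinct columns of the generator matrix are pairwise linearly independent), so $\bigcap_{H \in \HH(M_\Ss),\, e \in H} H = \cl(\{e\}) = \{e\}$; since $Y$ is nonempty and $e \notin Y$, some hyperplane $H^{(2)}$ through $e$ does not contain $Y$, and $H_2 = \phi(H^{(2)})$ delivers $\Ss(m-1)-\Ss(s-1)$. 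The non-containment $H_1 \nsubseteq H_2$ then comes for free: $\phi$ is a bijection and sends $H^{(1)} \neq H^{(2)}$ to distinct hyperplanes of $M_\CC$, and two distinct flats of the same rank $m-1$ cannot be nested.

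The step most requiring care is the construction of $H^{(2)}$ in the main case, since one must rule out the pathological possibility that every hyperplane through $e$ swallows $Y$; this is exactly where the fact that $\cl(\{e\}) = \{e\}$ in $M_\Ss$ (i.e., no two distinct columns of $G_m$ are collinear) is essential. Once this combinatorial fact and the straightforward flat-extension argument for $H^{(1)}$ are in place, the rest of the proof is a bookkeeping exercise through the dictionary established by Propositions~\ref{prop:surj_map} and~\ref{prop:coatom_isom}.
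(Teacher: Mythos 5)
Your proof is correct and follows the paper's strategy: pull back to the Simplex matroid $M_\Ss$ via Propositions~\ref{prop:surj_map} and \ref{prop:coatom_isom}, then for each $e$ exhibit hyperplanes of $M_\Ss$ through $e$ that do or do not contain $Y$. The one place you deviate is in constructing $H^{(2)}$ for $1 \leq s \leq m-2$: the paper picks a coatom $X \lessdot Y$, forms $X_e = \cl(X \cup \{e\})$, and expresses that flat as an intersection of hyperplanes to locate one avoiding $Y$, whereas you apply the intersection-of-hyperplanes property directly to the flat $\{e\}$. Your route is shorter and equally valid, since Proposition~\ref{prop:coatom_isom} depends only on whether $Y \subseteq H$ and not on the exact shape of $H \cap Y$. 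You also spell out the non-containment $H_1 \nsubseteq H_2$ (distinct flats of the same rank $m-1$ cannot be nested), which the paper leaves implicit.
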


\begin{proof}
Let $M_{\Ss}=(E_{\Ss}, \rho_{\Ss})$ be the matroid associated with the Simplex code $\Ss(m)$ and $Y \subseteq E_{\Ss}$ such that $M_{\CC}=M_{\Ss} \setminus Y$. Let $e \in E_{\Ss}\setminus Y$. The general idea of the proof is the following. First, we construct a specific hyperplane in $\HH(M_{\Ss})$ containing $e$. Secondly, we use Proposition \ref{prop:surj_map} to get a hyperplane of $M_{\CC}$. Finally, we apply Proposition \ref{prop:coatom_isom} to this hyperplane to obtain the isomorphism. 

If $s=0$, then the code $\CC$ is the Simplex code $\Ss(m)$ and there is therefore a hyperplane $H \in \HH(M_{\CC})$ containing $e$ such that $\CC|_{H}$ is isomorphic to $\Ss(m-1)$. 

%s=m-1
If $s=m-1$, then $Y \in \HH(M_{\Ss})$ and is the only hyperplane containing $Y$. We know that there exists at least one hyperplane in $\HH(M_{\Ss})$ that contains $e$. Let $H_{\Ss}$ be such a hyperplane. By Proposition \ref{prop:surj_map}, $H:=H_{\Ss} - H_{\Ss} \cap Y$ is a hyperplane of $M_{\CC}$ containing $e$. Applying Proposition \ref{prop:coatom_isom} yields that $\CC|_{H}$ is isomorphic to $\Ss(m-1) - \Ss(s-1)$ since $Y$ is not contained in $H_{\Ss}$.

%s<m-1
Suppose now that $1 \leq s\leq m-2$. Since $\rho_{\Ss}(Y \cup \{e \} ) \leq m-1$, there is a hyperplane $H_{Y,e}$ in $\HH(M_{\Ss})$ that contains $Y \cup e$. Therefore, Proposition \ref{prop:surj_map} and Proposition \ref{prop:coatom_isom} yield that $H_{1}:=H_{Y,e} - Y$ is a hyperplane of $M_{\CC}$ containing $e$ and $\CC|_{H_{1}}$ is isomorphic to $\Ss(m-1)-\Ss(s)$. 

Let $X \in \FF(M_{\Ss})$ such that $X \lessdot Y$ and let $X_{e} = \cl(X \cup \{ e \} ) \in \FF(M_{\Ss})$. By the hyperplanes property, we have $X_{e} = \bigcap \{ H_{\Ss} : X_{e} \subset H_{\Ss} \text{ and } H_{\Ss} \in \HH(M_{\Ss} ) \}$. Thus, there exists $H_{X,e} \in \HH(M_{\Ss})$ such that $X_{e} \subseteq H_{X,e}$ and $Y \nsubseteq H_{X,e}$. Applying Proposition \ref{prop:surj_map} and Proposition \ref{prop:coatom_isom} yields that $ H_{2}:=H_{X,e} -  H_{X,e} \cap Y =  H_{X,e} - X$ is a hyperplane of $M_{\CC}$ containing $e$ and $\CC|_{H_{2}}$ is isomorphic to $\Ss(m-1)-\Ss(s-1)$. 
\qed
\end{proof}

%Thm 4 gives existence of coatom, Proposition gives the "uniqueness". Therefore, no other closed set of rank m-1 with different size and, or min dist. 

Theorem \ref{thm:local_isom} can be seen as showing the existence of certain hyperplanes while Proposition \ref{prop:coatom_isom} is of the form of a uniqueness statement on the parameters size, dimension and minimum distance of the hyperplanes. Therefore, the two combined with Proposition \ref{prop:surj_map} yield the complete characterization of all the hyperplanes of the matroid associated to $\Ss(m)-\Ss(s)$ and thus the characterization of all restrictions of $\Ss(m)-\Ss(s)$ to closed sets of dimension $m-1$.

\subsection{Weight enumerator of $\Ss(m)-\Ss(s)$}

Before we continue deriving the localities with dimension less than $m-1$, the results developed so far allow us to compute the weight enumerator of the codes obtained by Construction \ref{cst:Sm_Ss}. For this, we use a theorem from \cite{oxley92} that links the hyperplanes and the codeword supports. 

\begin{theorem}[\hspace{1sp}\cite{oxley92} Theorem 9.2.4]
\label{thm:hyperplanes_cdwd_support}
For each linear code $\CC$, the hyperplanes of $M_{\CC}$ are precisely the complement of the minimal non-empty codeword supports of $\CC$.
\end{theorem}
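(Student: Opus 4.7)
The plan is to exploit the natural duality between nonzero linear functionals on $\F_q^k$ and nonzero codewords of $\CC$, translating each via $\mathbf{c}=\mathbf{x}G_\CC$ so that $\supp(\mathbf{c})^{c}=\{i\in[n]:\mathbf{g}_i\in V_\mathbf{x}\}$, where $V_\mathbf{x}=\{\mathbf{y}\in\F_q^k:\mathbf{x}\cdot\mathbf{y}=0\}$ is a $(k-1)$-dimensional subspace. Since linear spans of columns of $G_\CC$ correspond to flats of $M_\CC$ and rank corresponds to dimension, hyperplanes of $M_\CC$ should match exactly those subsets of columns that land inside (and span) such a $V_\mathbf{x}$, with spanning being equivalent to minimality of the dual codeword support.

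For the first inclusion, take a hyperplane $H\in\HH(M_\CC)$. Because $\rho_\CC(H)=k-1$, the columns $\{\mathbf{g}_i:i\in H\}$ span a $(k-1)$-dimensional subspace $V$ of $\F_q^k$; pick $\mathbf{x}\neq 0$ with $V=V_\mathbf{x}$. Any $i\notin H$ satisfies $\rho_\CC(H\cup\{i\})=k$ because $H$ is closed, so $\mathbf{g}_i\notin V_\mathbf{x}$; the reverse inclusion $\{i:\mathbf{g}_i\in V_\mathbf{x}\}\subseteq\cl(H)=H$ is immediate. Therefore $H=[n]\setminus\supp(\mathbf{x}G_\CC)$. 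Minimality of $\supp(\mathbf{x}G_\CC)$ follows from a closure argument: if some nonzero $\mathbf{c}'$ gave $\supp(\mathbf{c}')\subsetneq\supp(\mathbf{c})$, then $H':=[n]\setminus\supp(\mathbf{c}')$ would properly contain $H$, its columns would all lie in $V_{\mathbf{x}'}$ forcing $\rho_\CC(H')\leq k-1$, and picking $i\in H'\setminus H$ would contradict $\rho_\CC(H\cup\{i\})=k$.

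For the reverse inclusion, suppose $\supp(\mathbf{c})$ is a minimal nonempty codeword support with $\mathbf{c}=\mathbf{x}G_\CC$, and set $H:=[n]\setminus\supp(\mathbf{c})=\{i:\mathbf{g}_i\in V_\mathbf{x}\}$. This is automatically a flat, since it already collects every column of $G_\CC$ inside a fixed subspace. It has rank at most $k-1$; if $\rho_\CC(H)\leq k-2$, then the columns indexed by $H$ span some $W\subsetneq V_\mathbf{x}$ of dimension $\leq k-2$. Because $G_\CC$ has full row rank, some column $\mathbf{g}_j$ lies outside $V_\mathbf{x}$, and any $(k-1)$-dimensional subspace $V'$ containing $W+\langle\mathbf{g}_j\rangle$ is distinct from $V_\mathbf{x}$. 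The associated codeword $\mathbf{c}'=\mathbf{x}'G_\CC$ then has $\supp(\mathbf{c}')\subseteq\supp(\mathbf{c})$ (every column in $V_\mathbf{x}$ lies in $W\subseteq V'$), and the inclusion is strict because $j\in\supp(\mathbf{c})\setminus\supp(\mathbf{c}')$, contradicting minimality. Hence $\rho_\CC(H)=k-1$ and $H$ is a hyperplane.

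The main obstacle is the minimality-to-hyperplane direction: one must manufacture a strictly smaller codeword support, which requires choosing the extending subspace $V'$ so that it picks up a genuinely new column of $G_\CC$ not already inside $V_\mathbf{x}$. Both the existence of such a column (from full row rank of $G_\CC$) and the freedom to enlarge $W$ by one specific direction while still being contained in some $(k-1)$-dimensional hyperplane of $\F_q^k$ are what make the argument go through.
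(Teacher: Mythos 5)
The paper does not prove this theorem; it is imported verbatim from Oxley (Theorem~9.2.4) and used as a black box in the proof of Lemma~\ref{lemma:map_class_cdwd_hyperplanes}. Your self-contained argument is correct. Both directions are sound: in the forward direction the identification $H=\{i:\mathbf{g}_i\in V_\mathbf{x}\}=[n]\setminus\supp(\mathbf{x}G_\CC)$ for a hyperplane $H$ of rank $k-1$ is exact because $H$ is closed, and the minimality follows cleanly from the hyperplane covering property; in the reverse direction the key step — extending $W+\langle\mathbf{g}_j\rangle$ to a full hyperplane $V'\neq V_\mathbf{x}$ of $\F_q^k$ to manufacture a strictly smaller support — is valid because $G_\CC$ has full row rank, which guarantees both the existence of a column outside $V_\mathbf{x}$ and that the resulting codeword $\mathbf{c}'=\mathbf{x}'G_\CC$ is nonzero with nonempty support. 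One small point worth making explicit, which you leave implicit, is that $\supp(\mathbf{c}')\neq\emptyset$ (otherwise the minimality contradiction would be vacuous); this again holds because not all columns of $G_\CC$ can sit in a $(k-1)$-dimensional subspace. The duality between nonzero $\mathbf{x}\in\F_q^k$ modulo scalars and hyperplanes $V_\mathbf{x}$ that drives your proof is precisely the same device the paper uses in its own Lemma~\ref{lemma:map_class_cdwd_hyperplanes} for the Simplex code.
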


In order to compute the weight enumerator of $\CC=\Ss(m)-\Ss(s)$, the idea is to associate the codewords of $\CC$ with the codewords of $\Ss(m)$ and the hyperplanes of $M_{\Ss}$. Then, we can use Lemma \ref{lemma:coatom_Y} to understand the effect of the puncturing of $\Ss(m)$ on the hyperplanes of $M_{\Ss}$. We start by a lemma that links the codewords of $\Ss(m)$ and the hyperplanes of $M_{\Ss}$. 

\begin{lemma}
\label{lemma:map_class_cdwd_hyperplanes}
Let $\Ss_{q}(m)$ be the Simplex code of dimension $m$ over $\F_{q}$ and $M_{\Ss}$ the associated matroid. Define $\sim$ as the equivalence relation on the non-zero codewords of $\Ss(m)$ given by $\vc \sim \vc' $ if $\vc = a \vc'$ with $a \in \F_{q}^{\ast}$. Then the map
\begin{align*}
\varphi : \{ \vc : \vc \in \Ss(m), \vc \neq \mathbf{0} \} / \sim & \to \HH(M_{\Ss}) \\
 [\vc] & \mapsto [n]-\supp(\vc)
\end{align*}
is a bijection. 
\end{lemma}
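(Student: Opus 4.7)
The plan is to combine the already-cited Theorem \ref{thm:hyperplanes_cdwd_support} with the special constant-weight structure of the Simplex code. Every non-zero codeword of $\Ss_q(m)$ has weight exactly $q^{m-1}$, so no non-zero codeword support can properly contain another; consequently every non-zero codeword support is minimal. This observation is the bridge: Theorem \ref{thm:hyperplanes_cdwd_support} then tells us that $\HH(M_{\Ss})$ is exactly the set of complements $[n]-\supp(\vc)$ as $\vc$ ranges over non-zero codewords of $\Ss(m)$. This already delivers the surjectivity of $\varphi$ once it is known to be well-defined.

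Well-definedness is immediate: for $a \in \F_q^\ast$, one has $\supp(a\vc) = \supp(\vc)$, so $\varphi$ depends only on the equivalence class $[\vc]$. For injectivity, suppose $\varphi([\vc]) = \varphi([\vc'])$, i.e.\ $\supp(\vc) = \supp(\vc')$. Writing $\vc = \mathbf{x}^{T} G_m$ and $\vc' = (\mathbf{x}')^{T} G_m$ with $\mathbf{x}, \mathbf{x}' \in \F_q^{m}\setminus\{\mathbf{0}\}$, the complement of $\supp(\vc)$ consists exactly of those indices $i$ such that the $i$-th column $\mathbf{g}_i$ of $G_m$ lies in the hyperplane $\mathbf{x}^{\perp}$ of $\F_q^{m}$, and similarly for $\vc'$. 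Because the columns of $G_m$ are a set of representatives of \emph{all} one-dimensional subspaces of $\F_q^{m}$, the set of columns lying in $\mathbf{x}^{\perp}$ determines $\mathbf{x}^{\perp}$, and hence determines $\mathbf{x}$ up to a scalar. Therefore $\mathbf{x}' = a\mathbf{x}$ for some $a \in \F_q^{\ast}$, and $[\vc] = [\vc']$.

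Combining the three pieces (well-definedness, injectivity, and surjectivity via Theorem \ref{thm:hyperplanes_cdwd_support}) yields the bijection. The only non-routine step is the injectivity argument, and its crux is the observation that the set of columns of $G_m$ contained in a hyperplane of $\F_q^{m}$ uniquely pins down that hyperplane; this is where the defining property of the Simplex generator matrix (one column per projective point) is used essentially.
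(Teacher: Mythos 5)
Your proof is correct, but it takes a genuinely different route from the paper on the bijection step. You establish injectivity directly: unwinding $\vc=\mathbf{x}^{T}G_{m}$, the set $[n]-\supp(\vc)$ is precisely the set of column indices $i$ with $\mathbf{g}_{i}\in\mathbf{x}^{\perp}$, and since the columns of $G_{m}$ form a complete set of representatives of the projective points of $\F_q^{m}$, this index set pins down the hyperplane $\mathbf{x}^{\perp}$ and hence $\mathbf{x}$ up to a scalar. The paper instead bypasses injectivity entirely and uses a counting argument: there are $\tfrac{q^{m}-1}{q-1}$ scalar-equivalence classes of non-zero codewords and ${m \brack m-1}_{q}=\tfrac{q^{m}-1}{q-1}$ hyperplanes of $M_{\Ss}$ (these being exactly the $(m-1)$-dimensional subspaces of $\F_q^{m}$); a surjection between finite sets of the same size is a bijection. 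Your version has the mild advantage of being self-contained and explaining structurally why the map is injective; you also make explicit that every non-zero Simplex codeword has minimal support (via constant weight $q^{m-1}$), which guarantees the image actually lands inside $\HH(M_{\Ss})$ --- the paper leaves this implicit, although its cardinality argument in fact forces the image to equal $\HH(M_{\Ss})$ anyway. The paper's count is shorter and ties the lemma naturally to the subspace lattice of $\F_q^{m}$ used elsewhere in the section. Both are complete proofs.
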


\begin{proof}
It is clear that $\sim$ is indeed an equivalence relation. Furthermore, the map is well-defined since all multiples of a codewords share the same support. Then, Theorem \ref{thm:hyperplanes_cdwd_support} implies that $\varphi$ is a surjection. Now, there are $\frac{q^{m}-1}{q-1}$ equivalence classes of codewords and ${m \brack m-1}_{q}=\frac{q^{m}-1}{q-1}$ hyperplanes in $M_{\Ss}$ since they are exactly the linear spaces of dimension $m-1$ in $\F_{q}^{m}$. Hence $\varphi$ is a bijection. 
\qed
\end{proof}

We can now state the formula for the weight enumerator of $\CC$. 

\begin{theorem}
The weight enumerator of the code $\CC=\Ss_{q}(m)-\Ss_{q}(s)$ over $\F_{q}$ is
\[
W_{\CC}(x, y) = x^{\frac{q^{m}-q^{s}}{q-1}} + (q^{m}-q^{m-s})x^{\frac{q^{m-1} -q^{s-1}}{q-1}}y^{q^{m-1}-q^{s-1}} + (q^{m-s}-1)x^{\frac{q^{m-1}-q^{s}}{q-1}}y^{q^{m-1}}.
\]
\end{theorem}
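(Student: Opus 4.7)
The plan is to realize $\CC = \Ss(m)-\Ss(s)$ as a puncturing of $\Ss(m)$ on the coordinate set $Y$ corresponding to the deleted columns, and then translate the weight computation into the intersection pattern of the hyperplanes of $M_{\Ss}$ with $Y$ via Lemma \ref{lemma:map_class_cdwd_hyperplanes}. The natural puncturing map $\pi : \Ss(m) \to \CC$ that deletes the coordinates in $Y$ is surjective by construction of $G_{\CC}$, and because $\dim \CC = m = \dim \Ss(m)$ it is in fact a vector-space isomorphism. Hence it suffices to use the identity $\wt(\pi(\vc)) = \wt(\vc) - |\supp(\vc) \cap Y|$; since every nonzero codeword of $\Ss(m)$ has weight $q^{m-1}$, the computation reduces to determining $|\supp(\vc) \cap Y|$ as $[\vc]$ ranges over the equivalence classes of nonzero codewords under scalar multiplication.

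For each such class, Lemma \ref{lemma:map_class_cdwd_hyperplanes} identifies $\supp(\vc)$ with the complement of a hyperplane $H_{\vc} \in \HH(M_{\Ss})$, so $|\supp(\vc) \cap Y| = |Y| - |H_{\vc} \cap Y|$. I would then apply Lemma \ref{lemma:coatom_Y} to the flat $Y$ of rank $s$: either $Y \subseteq H_{\vc}$, in which case $|H_{\vc} \cap Y| = |Y| = (q^s-1)/(q-1)$ and $\wt(\pi(\vc)) = q^{m-1}$; or $H_{\vc} \cap Y \lessdot Y$, in which case $M_{\Ss}|(H_{\vc} \cap Y)$ has rank $s-1$ and is isomorphic to the matroid of the Simplex code $\Ss(s-1)$, giving $|H_{\vc} \cap Y| = (q^{s-1}-1)/(q-1)$ and hence $\wt(\pi(\vc)) = q^{m-1}-q^{s-1}$.

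It then remains to count the hyperplanes of each type. Since $\FF(M_{\Ss})$ is isomorphic to the lattice of linear subspaces of $\F_q^m$, the hyperplanes containing $Y$ are in natural bijection with the hyperplanes of the quotient $\F_q^m/Y \cong \F_q^{m-s}$, giving exactly $(q^{m-s}-1)/(q-1)$ such hyperplanes; the remaining $(q^m-q^{m-s})/(q-1)$ hyperplanes fall in the second case. Multiplying each count by $q-1$ (the size of a scalar-multiple class) produces $q^{m-s}-1$ codewords of weight $q^{m-1}$ and $q^m - q^{m-s}$ codewords of weight $q^{m-1}-q^{s-1}$; together with the zero codeword, and using that the corresponding $x$-exponents are $\frac{q^m-q^s}{q-1} - \wt$, one recovers the three terms of the stated formula. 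The only mildly delicate point is the bijectivity of $\pi$ and the identification of hyperplanes of $M_{\Ss}$ containing $Y$ with hyperplanes of the contraction $M_{\Ss}/Y$; both follow from the earlier matroid lemmas together with standard linear-algebra facts, so the argument is essentially a careful bookkeeping once the two cases of Lemma \ref{lemma:coatom_Y} are in hand.
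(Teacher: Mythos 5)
Your proposal is correct and follows essentially the same route as the paper's proof: realize the puncturing as a weight-preserving bijection up to the $Y$-contribution, translate supports to hyperplanes of $M_{\Ss}$ via Lemma~\ref{lemma:map_class_cdwd_hyperplanes}, split into the two cases of Lemma~\ref{lemma:coatom_Y}, and count hyperplanes of each type (the paper counts via Gaussian coefficients rather than through the quotient $\F_q^m/Y$, but these are the same computation).
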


\begin{proof}
Let $\vc \in \CC$ be a non-zero codeword of $\CC$ and $Y \subset [n]$ such that $\Ss(m) |_{[n]-Y} = \CC$, where $n$ is the length of $\Ss(m)$. Since both codes $\CC$ and $\Ss(m)$ have the same dimension, there is a bijection $\pi : \{ \vc': \vc'\in \Ss(m) \} \to \{ \vc : \vc \in \CC \}$ given by $\pi (\vc') = \vc'_{|_{[n]-Y}}$. Let $\vc'\in \Ss(m)$ be such that $\pi(\vc')=\vc$. We have
\[
\wt(\vc)=\wt(\pi(\vc'))=\wt(\vc'_{|_{[n]-Y}}) = \wt(\vc') - \wt(\vc'_{|_{Y}}).
\]
Let $H \in \HH(M_{\Ss})$ be the hyperplane obtained by $H=\varphi(\vc')$ in Lemma \ref{lemma:map_class_cdwd_hyperplanes}. Then, $\wt(\vc')=|\supp(\vc')| = n - |H|$ and $\wt(\vc'_{|_{Y}})=|Y| - |H \cap Y|$, since $H$ is the set of coordinates where $\vc'_{i}=0$ for $i \in H$. Hence, if $n_{\CC}$ denotes the length of $\CC$, we obtain
\[
\wt(\vc)=n_{\CC} - |H - H \cap Y|.
\] 
This shows that the weight of $\vc$ can be computed from the hyperplanes of $\Ss(m)$ and their relation with $Y$. By Lemma \ref{lemma:coatom_Y}, the hyperplanes of $M_{\Ss}$ split into two disjoint sets depending on whether they contain $Y$. We consider these two cases separately.

Suppose first that $Y \subset H$. Then, we have
\begin{align*}
\wt(\vc) &=n_{\CC} - |H| + |Y| \\
	& = \frac{q^{m}-q^{s}}{q-1} - \frac{q^{m-1}-1}{q-1} + \frac{q^{s}-1}{q-1} \\
	& = q^{m-1}.
\end{align*}
Now, the number of different hyperplane containing $Y$ in $M_{\Ss}$ is ${m-s \brack m-s-1}_{q} = \frac{q^{m-s}-1}{q-1}$. By Lemma \ref{lemma:map_class_cdwd_hyperplanes}, each hyperplanes yields a different equivalence class of codewords. Since there are $q-1$ codewords in each class, the number of codewords of weight $q^{m-1}$ is equal to $q^{m-s}-1$. 

Suppose now that $Y \nsubseteq H$. By Lemma \ref{lemma:coatom_Y} and the flats of the Simplex code, we know that $|H \cap Y| = \frac{q^{s-1}-1}{q-1}$. Then, we have
\begin{align*}
\wt(\vc) &=n_{\CC} - |H| + |H \cap Y| \\
	& = \frac{q^{m}-q^{s}}{q-1} - \frac{q^{m-1}-1}{q-1} + \frac{q^{s-1}-1}{q-1} \\
	& = q^{m-1} - q^{s-1}.
\end{align*}
The number of hyperplanes not containing $Y$ is ${m \brack m-1}_{q} - {m-s \brack m-s-1}_{q} = \frac{q^{m} - q^{s}}{q-1}$. By the same previous argument, the number of codewords of weight $q^{m-1}-q^{s-1}$ is then equal to $q^{m}-q^{s}$ and this concludes the proof. 
\qed
\end{proof}

%Recursive statement and example -----------------------------------
\subsection{The complete locality of $\Ss(m)-\Ss(s)$}

%Goal : get the other locality sets when <m-1 via recursive statement of Theorem. 
In this subsection, we describe the restrictions of $\CC=\Ss(m)-\Ss(s)$ of dimension less than $m-1$ to obtain the rest of the possible localities. The main observation is that Theorem \ref{thm:local_isom} was carefully written as a recursive statement on the same type of construction already considered, \ie, a Simplex code removed from another Simplex code. Therefore, we can apply Theorem \ref{thm:local_isom} again on the restriction $\CC|_{H}$ to obtain, up to isomorphism, the restricted codes of dimension $m-2$ of $\CC|_{H}$ and thus of $\CC$ as well. Moreover, since every code symbol is contained in a restriction of $\CC$ of dimension $m-1$, applying Theorem \ref{thm:local_isom} again on the restricted codes implies that every code symbol is also contained in a restriction of $\CC$ of dimension $m-2$. This is crucial when considering the locality of $\CC$. We start with an example.

%Small example like (12,4,6)
\begin{example}
\label{ex:1246_loc}
Let $\CC=\Ss_{2}(4)-\Ss_{2}(2)$ be the $[12,4,6]$ binary linear code of Example \ref{ex:1246}. Since $s<m-1$, Theorem \ref{thm:local_isom} implies that for all code symbols $e \in [n]$, there exist two sets $H_{1}, H_{2} \subseteq [12]$ containing $e$ such that $\CC|_{H_{1}}$ is isomorphic to $\Ss_{2}(3)-\Ss_{2}(1)$ and $\CC |_{H_{2}}$ is isomorphic to $\Ss_{2}(3)-\Ss_{2}(2)$. In other words, there are two restrictions $\CC|_{H_{1}}, \CC |_{H_{2}}$ containing $e$ with parameters $[6,3,3]$ and $[4,3,2]$ respectively. Therefore, $\CC$ is an LRC with locality $(r=4,\delta=3)$ and also an LRC with locality $(r=3,\delta=2)$. Notice that $r=4$ for the first locality even if the dimension of the restricted codes is equal to $3$. This is due to the fact that $r$ must satisfy $|R_{i}|\leq r+\delta-1$ as opposed to the definition of H-LRCs where $H(R_{i}) \leq r_{1}$. 

If we apply Theorem \ref{thm:local_isom} to $\Ss_{2}(3)-\Ss_{2}(1)$, we obtain by isomorphism that there exist $H_{3},H_{4} \subseteq H_{1}$ containing $e$ such that $\CC|_{H_{3}}$ is isomorphic to $\Ss_{2}(2)-\Ss_{2}(1)$ and $\CC|_{H_{4}}$ is isomorphic to $\Ss_{2}(2)$. $\Ss_{2}(2)-\Ss_{2}(1)$ is a $[2,2,1]$ code and thus does not provide an extra locality. On the other hand, $\Ss_{2}(2)$ is a $[3,2,2]$ code which implies that $\CC$ is also an LRC with locality $(r=2,\delta=2)$. Furthermore, by construction of the local sets, $\CC$ is an H-LRC with locality $[(3,3),(2,2)]$. 
\end{example}

%Goal : to the complete locality. Method and uniqueness-------------------
%Discuss where does it stop, left most and RM(1,m).
Example \ref{ex:1246_loc} illustrates how Theorem \ref{thm:local_isom} can be used to obtain the locality for different dimensions. Moreover, because of the recursive form of Theorem \ref{thm:local_isom}, the local sets can be arranged in such a way that we obtain a hierarchical locality. We break down what happens to the restrictions when we iterate Theorem \ref{thm:local_isom} by considering the restriction types, \ie , the different isomorphic restrictions. Suppose for simplicity that $m$ is sufficiently large and $s$ is close to half of $m$. As illustrated in Figure \ref{fig:Sm_triangle}, applying Theorem \ref{thm:local_isom} on the two restriction types of dimension $m-1$ gives three new restriction types of dimension $m-2$, as two of them lead to the same isomorphic code.

% Picture of triangle S(m)-S(s) - S(m-1)-S(s) / S(m-1)-S(s-1)
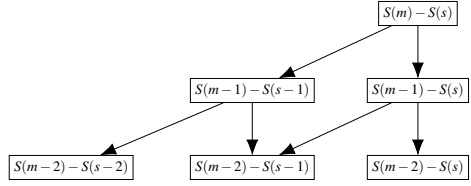
\begin{figure}
\centering
\resizebox{0.5\textwidth}{!}{%
	\begin{tikzpicture}
		\node[shape=rectangle,draw=black] (00) at (3.2,3) {\small $S(m)-S(s)$};
		%level 1
		\node[shape=rectangle,draw=black] (10) at (3.2,1.5) {\small $S(m-1)-S(s)$};
		\node[shape=rectangle,draw=black] (11) at (0,1.5) {\small $S(m-1)-S(s-1)$};

		%level 2
		\node[shape=rectangle,draw=black] (20) at (3.2,0) {\small $S(m-2)-S(s)$};
		\node[shape=rectangle,draw=black] (21) at (0,0) {\small $S(m-2)-S(s-1)$};
		\node[shape=rectangle,draw=black] (22) at (-3.5,0) {\small $S(m-2)-S(s-2)$};

     \path [-{Latex[length=3mm]}] (00) edge (10);
     \path [-{Latex[length=3mm]}] (00) edge (11);
     %1->2
     \path [-{Latex[length=3mm]}] (10) edge (20);
     \path [-{Latex[length=3mm]}] (10) edge (21);
      \path [-{Latex[length=3mm]}] (11) edge (21);
     \path [-{Latex[length=3mm]}] (11) edge (22);
     \end{tikzpicture}
     }
    \caption{Illustration of the top level locality of the code $\Ss(m)-\Ss(s)$ after appropriate isomorphisms.}
    \label{fig:Sm_triangle}
\end{figure}

Suppose now that, after some iterations of Theorem \ref{thm:local_isom}, we obtain the restriction types of dimension $\kappa$ . Let $a,b \in \Z_{+}$ such that all restriction types are of the form $\Ss(\kappa)-\Ss(i)$ with $a \leq i \leq b$. Now, the restriction types of dimension $\kappa -1$ can be obtained by applying Theorem \ref{thm:local_isom} on the restriction types of dimension $\kappa$. Two extremal cases need to be taken into account. If $a=0$, the only restriction type that we get from $\Ss(\kappa)-\Ss(a)$ is $\Ss(\kappa-1)$. If $b=\kappa-1$, the only restriction type is $\Ss(\kappa-1) - \Ss(\kappa-2)$. This is illustrated in Figure \ref{fig:proof_loc_kappa_scheme} where the two dashed boxes represent the conditional new restriction types that exist only if $a>0$ or $b<\kappa-1$. This sketches the high level idea of the proof of the next theorem describing precisely the different restriction types for a given dimension $\kappa$. 

\begin{figure}
\centering
\resizebox{0.7\textwidth}{!}{%
	\begin{tikzpicture}
	\node[shape=rectangle,draw=black,dashed] (00) at (0,0) {\small $S(\kappa-1)-S(a-1)$};
	\node[shape=rectangle,draw=black] (01) at (3.2,0) {\small $S(\kappa -1)-S(a)$};
	\node[] (20) at (5,0) { \small $\ldots$ };
	\node[shape=rectangle,draw=black] (03) at (7.2,0) {\small $S(\kappa -1)-S(b-1)$};
	\node[shape=rectangle,draw=black, dashed] (04) at (10.4,0) {\small $S(\kappa -1)-S(b)$};
	%upper part
	\node[shape=rectangle,draw=black] (10) at (3.2,1.5) {\small $S(\kappa)-S(a)$};
	\node[] (21) at (5,1.5) {\small $\ldots$};
	\node[shape=rectangle,draw=black] (12) at (7.2,1.5) {\small $S(\kappa)-S(b-1)$};
	\node[shape=rectangle,draw=black] (13) at (10.4,1.5) {\small $S(\kappa)-S(b)$};
	
	%vertical paths
     \path [-{Latex[length=3mm]}] (10) edge (01);
     \path [-{Latex[length=3mm]}] (12) edge (03);
     %left path
     \path [dashed,-{Latex[length=3mm]}] (10) edge node[above left, pos=0.5] {\small $a>0$ } (00);
     %right paths
     \path [dashed,-{Latex[length=3mm]}] (13) edge node[right] {\small $b<\kappa-1$} (04);
     \path [-{Latex[length=3mm]}] (13) edge (03);
     \end{tikzpicture}
     }
    \caption{Sketch of an iteration of Theorem \ref{thm:local_isom} on the different restrictions types.}
    \label{fig:proof_loc_kappa_scheme}
\end{figure}
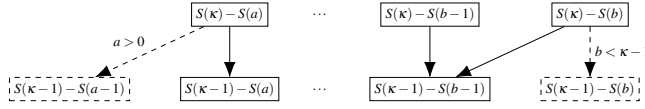

%Theorem about locality given the dimension. Locality for all kappa in between something and something.

\begin{theorem}
\label{thm:S_kappa}
Let $\CC=\Ss(m)-\Ss(s)$ with $m \geq 3$ and $ 0 \leq s \leq m-1$. Let $\kappa \in [2, m-1]$ and $i$ be an integer such that $\max \{0, s-m + \kappa \} \leq i \leq \min \{ s, \kappa -1 \}$. Then for all code symbols $e \in [n]$, there is a set $F_{i} \subseteq [n]$ containing $e$ such that $\CC|_{F_{i}}$ is isomorphic to $\Ss(\kappa)-\Ss(i)$. 
\end{theorem}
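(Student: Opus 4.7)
The plan is to prove the theorem by descending induction on $\kappa$, starting from $\kappa = m-1$ and working down to $\kappa = 2$, at each step peeling off one level of hyperplane restriction via Theorem \ref{thm:local_isom}. The base case $\kappa = m-1$ is essentially Theorem \ref{thm:local_isom} itself: for $1 \leq s \leq m-2$ the two hyperplanes $H_1,H_2$ given there realize $i \in \{s-1, s\}$, which matches the admissible range $\max\{0,s-1\} \leq i \leq \min\{s,m-2\}$; the edge cases $s=0$ (forcing $i=0$ with $\Ss(m-1)=\Ss(m-1)-\Ss(0)$) and $s=m-1$ (forcing $i=m-2$) are exactly the first two bullets of Theorem \ref{thm:local_isom}.

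For the inductive step, I assume the statement holds for some $\kappa \in [3, m-1]$ and fix a symbol $e$ together with an admissible index $j$ for dimension $\kappa-1$, that is, $\max\{0, s-m+\kappa-1\} \leq j \leq \min\{s, \kappa-2\}$. The key combinatorial observation is that exactly one of $i = j$ or $i = j+1$ is simultaneously admissible for dimension $\kappa$ and produces $\Ss(\kappa-1)-\Ss(j)$ among the hyperplane restrictions of $\Ss(\kappa)-\Ss(i)$ promised by Theorem \ref{thm:local_isom}. Concretely, if $j \geq s-m+\kappa$, I take $i=j$; otherwise $j = s-m+\kappa-1 \geq 0$ and I take $i = j+1 = s-m+\kappa$. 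In both cases one checks directly, using $\kappa \leq m$ for the upper bound when $j = s$, that $i$ lies in $[\max\{0, s-m+\kappa\}, \min\{s, \kappa-1\}]$, and then the trichotomy of Theorem \ref{thm:local_isom} (corner cases $i=0$, $i=\kappa-1$, and the generic range) confirms that $\Ss(\kappa-1)-\Ss(j)$ appears as a restriction type of $\Ss(\kappa)-\Ss(i)$.

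With $i$ chosen, I invoke the induction hypothesis to obtain a set $F_i \subseteq [n]$ with $e \in F_i$ and $\CC|_{F_i} \cong \Ss(\kappa)-\Ss(i)$. Because this restricted code is itself of the form produced by Construction \ref{cst:Sm_Ss}, Theorem \ref{thm:local_isom} may be applied to it at the symbol $e$, yielding a hyperplane $F_j \subseteq F_i$ containing $e$ with $(\CC|_{F_i})|_{F_j} \cong \Ss(\kappa-1)-\Ss(j)$. Since $(\CC|_{F_i})|_{F_j} = \CC|_{F_j}$, this $F_j$ is the desired set and closes the induction.

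The main obstacle is the bookkeeping in the middle paragraph: one must verify that the two-choice rule $i \in \{j, j+1\}$ always hits an admissible $i$ while ensuring the correct isomorphism type comes out of Theorem \ref{thm:local_isom}. The delicate points are the boundary values $j = s$ (where $i = j+1$ would overshoot $s$, so $i = j$ must be used) and $j = s-m+\kappa-1$ when positive (where $i = j$ falls below the lower threshold, so $i = j+1$ must be used), together with the corner behavior $i = 0$ or $i = \kappa-1$ where Theorem \ref{thm:local_isom} produces only one restriction type; all of these are routine case checks once the bounds are written out in parallel.
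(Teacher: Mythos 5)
Your proposal is correct and follows essentially the same route as the paper: reverse induction on $\kappa$ from $m-1$ down to $2$, with Theorem \ref{thm:local_isom} applied to the restriction furnished by the inductive hypothesis. The only difference is organizational --- the paper iterates over the level-$\kappa$ indices $i$ and collects the level-$(\kappa-1)$ types they produce, while you iterate over the target index $j$ at level $\kappa-1$ and pick a suitable source $i \in \{j, j+1\}$; both versions reduce to the same boundary checks.
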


\begin{remark}
Notice that $\max \{0, s-m + \kappa \} \leq \min \{ s, \kappa -1 \}$. Indeed we have $\min \{ s, \kappa -1 \} \geq 0$ since $\kappa \geq 3$ and $s\geq 0$. Since $\kappa \leq m-1$ and $s \leq m-1$, then $s-m+\kappa \leq s-1$ and $s-m+\kappa \leq \kappa -1$. Thus, $\min \{ s, \kappa -1 \} \geq s-m+\kappa$ and we have that $\max \{0, s-m + \kappa \} \leq \min \{ s, \kappa -1 \}$. Therefore, the claim of Theorem \ref{thm:S_kappa} is that there always exist such restrictions for all dimensions $\kappa \in [2,m-1]$. 
\end{remark}

\begin{proof}
We denote by $I_{\kappa}$ the set $I_{\kappa} = \{ i \in \Z_{+} : \max \{0, s-m + \kappa \} \leq i \leq \min \{ s, \kappa -1 \} \}$. Let also $e \in [n]$ be an arbitrary code symbol.

The case $\kappa=m-1$ is exactly Theorem \ref{thm:local_isom} with the set $I_{m-1}$ being equal to $\{ 0 \}$ when $s=0$, $I_{m-1}= \{ m-2 \}$ if $s=m-1$, and $I_{m-1} = \{ s-1, s \}$ if $1 \leq s \leq m-2$. 

We now prove the theorem by doing a reverse induction on $\kappa$ as long as $\kappa \geq 3$. Assume that the claim is true for $\kappa$, \ie, for all $j \in I_{\kappa}$ and for all code symbol $e \in [n]$, there exists a set $W_{j} \subseteq [n]$ containing $e$ such that $\CC|_{W_{j}}$ is isomorphic to $\Ss(\kappa) - \Ss(j)$. The goal is to construct the restrictions of dimension $\kappa-1$. 

%First prove that the biggest index leads to another local set.---

Let $b=\min \{ s, \kappa -1 \}$. By the induction hypothesis, there exists a set $W_{b} \subseteq [n]$ containing $e$ such that $\CC|_{W_{b}}$ is isomorphic to $\Ss(\kappa) - \Ss(b)$. 

If $b=\kappa-1$ then applying Theorem \ref{thm:local_isom} on $\CC'= \Ss(\kappa)-\Ss(\kappa-1)$ yields that for all symbols $e'\in [n']$, with $n'$ being the length of the code $\CC'$, there is a set $F \subseteq [n']$ containing $e'$ such that $\CC'|_{F}$ is isomorphic to $\Ss(\kappa-1)-\Ss(\kappa-2)$. Therefore, there exists a set $F_{b-1} \subseteq [n]$ containing $e$ such that $\CC|_{F_{b-1}}$ is isomorphic to $\Ss(\kappa-1) - \Ss(\kappa-2)$ since we have that $\CC|_{F_{b-1}} = (\CC|_{W_{b}})|_{F_{b-1}}$. Since $b=\kappa-1$, then we have that $\kappa-2=b-1=\min \{ s, \kappa-2 \}$. 

If $b=s< \kappa -1$, then the third case of Theorem \ref{thm:local_isom} applied to $\Ss(\kappa) - \Ss(s)$ and the corresponding isomorphisms yields that there exists a set $F_{b}$ such that  $\CC|_{F_{b}}$ is isomorphic to $\Ss(\kappa) - \Ss(b)$. Moreover, since $b< \kappa -1$, then $b=\min \{ s, \kappa -2 \}$. 

% Creates the local sets just below. 

Let $j \in [\max \{0, s-m + \kappa \}, \min \{ s, \kappa -1 \} -1]$. Notice that this interval might be empty for example if $s=m-1$. If the interval is not empty, let $W_{j}$ the set containing $e$ by the induction hypothesis such that $\CC|_{W_{j}}$ is isomorphic to $\Ss(\kappa)- \Ss(j)$. Since by definition of the interval, we have that $j< \kappa -1$, we can apply Theorem \ref{thm:local_isom} to obtain a set $F_{i} \subseteq [n]$ containing $e$ such that $\CC|_{F_{i}}$ is isomorphic to $\Ss(\kappa - 1) - \Ss(j)$. 

Finally, if $\max \{0, s-m + \kappa \} > 0$, then let $a = s-m+\kappa$ and $W_{a}$ containing $e$ such that $\CC|_{W_{a}}$ is isomorphic to $\Ss(\kappa)-\Ss(a)$. Since $a \geq 1$, we can use the third case of Theorem \ref{thm:local_isom} applied to $\Ss(\kappa)-\Ss(a)$ to obtain a set $F_{a-1} \subseteq [n]$ such that $\CC|_{F_{a-1}}$ is isomorphic to $\Ss(\kappa-1) - \Ss(a-1)$. Moreover, we have $a-1 = \max \{ 0, s-m+\kappa -1 \}$. 

Hence, we have constructed a set for $e$ isomorphic to $\Ss(\kappa -1) - \Ss(i)$ for all $i$ in the set 

\begin{align*}
[\max & \{0, s-m + \kappa \}, \min \{ s, \kappa -1 \} -1] \\
& \qquad \cup \min \{ s, \kappa -2 \} \cup \max \{ 0, s-m+\kappa -1 \} \\
& =[\max \{ 0, s-m+\kappa -1 \}, \min \{ s, \kappa -2 \}] \\
& = I_{\kappa -1}.
\end{align*}

Since $e$ was an arbitrary symbol, this concludes the proof. 
\qed
\end{proof}

%To obtain locality, need to make sure that min dist is >1 and therefore cut it short. 

As a corollary of Theorem \ref{thm:S_kappa}, we have that $\CC$ is an LRC as long as the restriction types have a minimum distance greater than or equal to $2$. We choose to give here the length, dimension and minimum distance of the local codes to avoid confusion on the parameter $r$ and $r_{1}$ in the two definitions of LRCs and H-LRCs. 

\begin{corollary}
\label{cor:locality_kappa_params}
Let $\CC=\Ss(m)-\Ss(s)$ with $m \geq 3$ and $ 0 \leq s \leq m-1$. Let $\kappa \in [3, m-1]$ and $i$ be an integer such that $\max \{0, s-m + \kappa \} \leq i \leq \min \{ s, \kappa -1 \}$. Then $\CC$ is an LRC with local code parameters
\[
\begin{cases}
\left[ \frac{q^{\kappa} - q^{i}}{q-1}, \kappa , q^{\kappa -1} - q^{i-1}  \right] & \text{ if } i>0,\\
\left[ \frac{q^{\kappa} - 1}{q-1}, \kappa , q^{\kappa -1}  \right] & \text{ if } i=0. \\
\end{cases}
\] 

Furthermore, for $\kappa=2$, we have the following local parameters:
\begin{itemize}
\item If $0 \leq s \leq m-2$, then $\CC$ is an LRC with local parameters $[q+1,2,q]$
\item If $q>2$ and $1 \leq s \leq m-1$ then $\CC$ is an LRC with local parameters $[q,2,q-1]$.  
\end{itemize}
\end{corollary}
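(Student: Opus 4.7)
The plan is to obtain this corollary as an immediate consequence of Theorem \ref{thm:S_kappa} together with the known parameters of the codes produced by Construction \ref{cst:Sm_Ss}. Fix $\kappa$ and an integer $i$ in the permitted range $[\max\{0, s-m+\kappa\}, \min\{s, \kappa-1\}]$. By Theorem \ref{thm:S_kappa}, every code symbol $e \in [n]$ is contained in some set $F_i \subseteq [n]$ such that $\CC|_{F_i}$ is isomorphic to $\Ss(\kappa) - \Ss(i)$. These isomorphic restrictions serve as the local sets, so it remains only to read off the parameters of $\Ss(\kappa) - \Ss(i)$ and verify that the minimum distance is at least $2$, so that the restriction is a genuine local code.

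For $\kappa \in [3, m-1]$ and $i = 0$, one has $\Ss(\kappa) - \Ss(0) = \Ss(\kappa)$ with parameters $\left[\frac{q^\kappa - 1}{q-1}, \kappa, q^{\kappa-1}\right]$ by the definition of the Simplex code recalled in Section \ref{sec:def}. For $i \geq 2$, \cite[Theorem 14 and Lemma 16]{silberstein18} give the parameters $\left[\frac{q^\kappa - q^i}{q-1}, \kappa, q^{\kappa-1} - q^{i-1}\right]$. The boundary case $i = 1$ is not stated verbatim there, but the same formulas remain valid since the underlying argument is purely combinatorial and amounts to counting the deleted columns of $G_\kappa$ together with the weight distribution of $\Ss(\kappa)$. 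In all of these cases $\delta \geq 2$: for $i = 0$ and $\kappa \geq 3$ we have $q^{\kappa-1} \geq 4$, and for $i \geq 1$ we have $q^{\kappa-1} - q^{i-1} \geq q^{\kappa-2}(q-1) \geq 2$ since $i \leq \kappa - 1$.

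For $\kappa = 2$, Theorem \ref{thm:S_kappa} still produces the restriction types, and the permitted values of $i$ collapse to $i = 0$ (available whenever $s \leq m-2$) and $i = 1$ (available whenever $1 \leq s \leq m-1$). The first yields $\CC|_{F_0} \cong \Ss(2)$ with parameters $[q+1, 2, q]$, a genuine local code with $\delta = q \geq 2$ for every $q$. The second yields $\CC|_{F_1} \cong \Ss(2) - \Ss(1)$ with parameters $[q, 2, q-1]$, and $\delta = q - 1 \geq 2$ holds precisely when $q > 2$, which is exactly the side condition appearing in the second part of the corollary.

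The proof is essentially bookkeeping and there is no serious obstacle, since the hard combinatorial work was already completed in Theorem \ref{thm:S_kappa}. The only small point worth flagging is the boundary case $i = 1$, which is not literally in \cite[Theorem 14]{silberstein18} but follows by the same argument; once that is noted, the corollary becomes a direct translation of the list of isomorphic restriction types into the $[n_{\mathrm{loc}}, k_{\mathrm{loc}}, d_{\mathrm{loc}}]$ language used in the statement.
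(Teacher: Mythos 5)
Your proof is correct and takes essentially the same approach as the paper: invoke Theorem \ref{thm:S_kappa} to produce a restriction isomorphic to $\Ss(\kappa)-\Ss(i)$ containing each symbol, read off the $[n,k,d]$ parameters of $\Ss(\kappa)-\Ss(i)$, and verify $d \geq 2$, with the case split at $\kappa=2$ handled by noting which $i$ are available for which $s$ and when $q>2$ is needed. Your explicit check that $q^{\kappa-1}-q^{i-1}\geq q^{\kappa-2}(q-1)\geq 2$ and your remark about extending the formulas of \cite[Theorem 14]{silberstein18} to $i=1$ are a bit more spelled out than the paper's treatment, but the substance is identical.
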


\begin{proof}
When $\kappa \geq 3$, Theorem \ref{thm:S_kappa} guaranties that for every code symbol $e \in [n]$, there is a restriction containing $e$ isomorphic to $\Ss(\kappa)-\Ss(i)$. Therefore $\CC$ is an LRC code with the parameters of the local sets given by the parameters of $\Ss(\kappa)-\Ss(i)$ with a minimum distance greater or equal than $2$. 

When $\kappa=2$, we need to distinguish some cases as the minimum distance might be smaller than $2$. Let $I_{2}= \{ i : \max \{0, s-m + 2 \} \leq i \leq \min \{ s, 1 \} \}$ be the range of the parameter $i$ when $\kappa=2$. If $s=0$, then $I_{2} = \{ 0 \}$ and by Theorem \ref{thm:S_kappa}, there is a restriction containing $e$ isomorphic to $\Ss(2)$. The parameters of the restriction is then $[q+1,2,q]$ and the minimum distance is thus greater or equal than $2$. 

If $1 \leq s \leq m-2$, then $s-m+2 \leq 0$ so $\max \{0, s-m + 2 \} =0$ and $I_{2}=\{0,1\}$. By Theorem \ref{thm:S_kappa}, every code symbol is contained in a restriction isomorphic to $\Ss(2)$ and in another restriction isomorphic to $\Ss(2)-\Ss(1)$. Since the parameters of $\Ss(2)-\Ss(1)$ are $[q,2,q-1]$, then the minimum distance is greater or equal than $2$ if and only if $q>2$. 

Finally, if $s=m-1$, then $s-m+2=1$ and $I_{2}=\{1\}$. Thus, $\CC$ is an LRC code with locality parameters $[q,2,q-1]$ if $q>2$. 
\qed
\end{proof}

%Special cases-----
% Uniqueness is guaranty by Proposition even after arbitrary set since every flat is contained in a flat of higher rank. 

Notice that when $\kappa=2$, we obtain again the locality proven in \cite{silberstein18}. We explain next why the list of localities of Corollary \ref{cor:locality_kappa_params} is the complete list of possible localities with closed local sets. In Corollary \ref{cor:locality_kappa_params}, the list of different localities is established by removing the restrictions leading to codes with minimum distance $1$ from the list of Theorem \ref{thm:S_kappa}. But the list of Theorem \ref{thm:S_kappa} relies on consecutive iterations of Theorem \ref{thm:local_isom}, where Proposition \ref{prop:coatom_isom} can be applied at each iteration guaranteeing the uniqueness of the restriction types. Therefore, Theorem \ref{thm:S_kappa} contains the complete list of restriction types and Corollary \ref{cor:locality_kappa_params} contains the complete list of possible localities with closed local sets. 

We will now look at two special cases of Theorem \ref{thm:S_kappa} and Corollary \ref{cor:locality_kappa_params}. In the first example, we study the case when $s=0$.

\begin{example}
Let $\CC = \Ss(m)-\Ss(0) = \Ss(m)$ with $m \geq 3$. Let $I_{\kappa}= \{ i : \max \{0, s-m + \kappa \} \leq i \leq \min \{ s, \kappa -1 \} \}$ be the range of the parameter $i$ in Theorem \ref{thm:S_kappa}. Since $s=0$, we have that $I_{\kappa}=0$. Then Theorem \ref{thm:S_kappa} and Corollary \ref{cor:locality_kappa_params} imply that $\CC$ is an LRC code with local parameters $\left[ \frac{q^{\kappa} - 1}{q-1}, \kappa , q^{\kappa -1}  \right]$ for all $\kappa \in [2,m-1]$. This result was already proven in \cite{grezet18LRC}. 
\end{example}

For the second example, we look at the case when $s=m-1$ and $\CC$ corresponds to the Reed--Muller code RM$(1,m-1)$.

\begin{example}
Let $\CC= \Ss(m)-\Ss(m-1)$ with $m \geq 3$ corresponding to the Reed--Muller code RM$(1,m-1)$. Let $I_{\kappa}$ the set $I_{\kappa} = \{ i : \max \{0, s-m + \kappa \} \leq i \leq \min \{ s, \kappa -1 \} \}$. Since $s=m-1$, we have $I_{\kappa}=\{ \kappa -1 \}$. Hence Theorem \ref{thm:S_kappa} and Corollary \ref{cor:locality_kappa_params} imply that RM$(1,m-1)$ is an LRC with local parameters $[q^{\kappa -1}, \kappa , q^{\kappa-1}-q^{\kappa -2}]$ for all $\kappa \in [3,m-1]$ and the local codes are isomorphic to the Reed--Muller code RM$(1,\kappa-1)$. Furthermore, if $q>2$, then RM$(1,m-1)$ is also an LRC with local parameters $[q,2,q-1]$. Moreover, since every locality is obtained by restrictions on the previous local sets, we get that the Reed--Muller codes RM$(1,m-1)$ are H-LRCs with $(m-3)$-level hierarchy over the binary field and with $(m-2)$-level hierarchy over $\F_{q}$ with $q>2$. 
\end{example}

We conclude this section by showing the optimality of $\Ss(m)-\Ss(s)$ with respect to the bounds \eqref{eq:CMG_r} and \eqref{eq:CM_HLRC_h} and present a table containing binary codes obtained by Construction \ref{cst:Sm_Ss} and their hierarchical localities. 

%Optimality for LRC, H-LRC with 2-level given by Griesmer bound and \lambda =0.  
\begin{theorem}
Let $\CC=\Ss(m)-\Ss(s)$ with $m \geq 3$ and $0 \leq s \leq m-1$. Then $\CC$ is an optimal LRC for every locality described in Corollary \ref{cor:locality_kappa_params}. Moreover, $\CC$ is an optimal H-LRC with
\[
\left\lbrace
\begin{array}{ll}
(m-2)\text{-level hierarchical locality } & \text{when } q>2 \text{ or } q=2 \text{ and } s<m-1,\\
(m-3)\text{-level hierarchical locality } & \text{when } q=2 \text{ and } s=m-1.\\
\end{array}
\right.
\]
\end{theorem}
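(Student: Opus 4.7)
My proof plan has two layers. First, I would verify that $\CC$ actually carries the claimed $h$-level hierarchical locality structure. Second, I would invoke the Griesmer property of $\CC$ together with the $\lambda = 0$ specializations of the bounds \eqref{eq:CMG_r} and \eqref{eq:CM_HLRC_h} to conclude alphabet-optimality.

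For the hierarchical locality step, the proof of Theorem \ref{thm:S_kappa} is already recursive: the level-$(\kappa-1)$ local sets containing a fixed symbol $e$ are produced by applying Theorem \ref{thm:local_isom} \emph{inside} a previously constructed level-$\kappa$ local set. Hence the nesting $L_{(j+1)_e} \subseteq L_{j_e}$ required by the $h$-level H-LRC definition is built into the construction. The hierarchy then extends over every value of $\kappa$ from $m-1$ down to the smallest $\kappa$ for which some restriction type $\Ss(\kappa) - \Ss(i)$ indexed by $I_\kappa$ has minimum distance at least $2$. Corollary \ref{cor:locality_kappa_params} already records that $\kappa = 2$ provides such a restriction except in the corner case $q = 2$, $s = m-1$, where $I_2 = \{1\}$ and the only candidate $\Ss_2(2) - \Ss_2(1)$ is the $[2,2,1]$ code with $\delta = 1$; in that case the hierarchy must terminate at $\kappa = 3$. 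Counting the dimensions from $m-1$ down to $2$ (respectively $3$) gives $m-2$ (respectively $m-3$) levels, matching the statement.

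For the optimality step, $\CC$ achieves the Griesmer bound by \cite[Lemma 16]{silberstein18}, hence $k_{\opt}^{(q)}(n, d) = m$ for the parameters of $\CC$. Substituting $\lambda = 0$ into \eqref{eq:CMG_r} forces $a = b = 0$ and therefore $\mu = (0+1)\GG_q(\kappa, \delta) - \GG_q(\kappa, \delta) = 0$, collapsing the bound to $k \leq k_{\opt}^{(q)}(n, d)$; the same substitution into \eqref{eq:CM_HLRC_h} kills every floor term, giving $\nu = 0$ and the identical inequality. Since $\CC$ attains equality in $k \leq k_{\opt}^{(q)}(n, d)$, it saturates both bounds simultaneously, which yields optimality as an LRC for each individual locality of Corollary \ref{cor:locality_kappa_params} and optimality as an H-LRC for the hierarchical structure just produced. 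This is exactly the mechanism highlighted in the remark following Theorem \ref{thm:CM_HLRC_h}.

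I expect the main obstacle to lie in the hierarchical-locality step rather than the optimality step: one must check that the descending recursion of Theorem \ref{thm:S_kappa} really does produce a nested chain of local sets for every individual code symbol down to $\kappa = 2$ (or $\kappa = 3$ in the exceptional binary case), and that no further nontrivial level exists below that threshold. The optimality half, by contrast, is essentially a one-line invocation of $\lambda = 0$ once the Griesmer-optimality of $\CC$ is recalled.
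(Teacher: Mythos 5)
Your proposal follows essentially the same route as the paper: the recursive structure of Theorem \ref{thm:S_kappa} supplies the nested local sets, the exceptional binary case $q=2$, $s=m-1$ is identified via Corollary \ref{cor:locality_kappa_params}, and setting $\lambda=0$ in both \eqref{eq:CMG_r} and \eqref{eq:CM_HLRC_h} collapses the bounds to $k \leq k_{\opt}^{(q)}(n,d)$, which $\CC$ attains by its Griesmer optimality.

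One point that you flag as the likely obstacle but leave unresolved is precisely what the paper pins down. Applying Theorem \ref{thm:local_isom} inside a set with $\CC|_{F}$ isomorphic to $\Ss(\kappa)-\Ss(i)$ lets you descend to \emph{either} $\Ss(\kappa-1)-\Ss(i)$ \emph{or} $\Ss(\kappa-1)-\Ss(i-1)$, so knowing merely that \emph{some} restriction type at level $\kappa$ has minimum distance $\geq 2$ does not by itself yield a single nested chain whose restriction type at every level satisfies the distance requirement (e.g., for $q=2$, $s\geq 1$, $s\leq m-2$, the chain that stays at $\Ss(\kappa)-\Ss(\min\{s,\kappa-1\})$ dead-ends in the $[2,2,1]$ code at $\kappa=2$). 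The paper resolves this by fixing the leftmost diagonal chain $F_{\kappa}$ with $\CC|_{F_{\kappa}}$ isomorphic to $\Ss(\kappa)-\Ss(\max\{0,s-m+\kappa\})$, computing $\delta_{\kappa}=q^{\kappa-1}-q^{s-m+\kappa-1}=d/q^{m-\kappa}$ when $\kappa>m-s$ and $\delta_{\kappa}=q^{\kappa-1}$ when $\kappa\leq m-s$, and observing that $\delta_{\kappa}<2$ occurs only for $\kappa=2$ together with $q=2$ and $s=m-1$. It also checks, as the definition of $h$-level H-LRC requires, that this chain construction applies to \emph{every} symbol $a\in F_{\kappa}$, not just the initially fixed $e$. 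Once you make this specific chain choice (and the ``for all $a$'' verification) explicit, your argument coincides with the paper's.
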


\begin{proof}
Since $\CC$ already achieves the Griesmer bound on the parameters $[n,k,d]$, it will achieves the bound \eqref{eq:CMG_r} for $\lambda=0$ and thus be an optimal LRC code for each locality described in Corollary \ref{cor:locality_kappa_params}. 

We exhibit now a particular chain of subsets that yields the hierarchical locality. Vaguely speaking, the chain is obtained by taking the left foremost diagonal restriction types in Figures \ref{fig:Sm_triangle} and \ref{fig:proof_loc_kappa_scheme}. 

Formally, let $e \in [n]$ be an arbitrary code symbol. By Theorem \ref{thm:S_kappa}, for all $\kappa \in [2,m-1]$, there exists a set $F_{\kappa} \subseteq [n]$ containing $e$ such that $\CC|_{F_{\kappa}}$ is isomorphic to $\Ss(\kappa)-\Ss(\max\{0,s-m+\kappa\})$. Define $F_{m}=[n]$. Each $F_{\kappa}$ was obtained in the proof of Theorem \ref{thm:S_kappa} by applying Theorem \ref{thm:local_isom} to $\CC|_{F_{\kappa+1}}$. Therefore, we have $F_{2} \subseteq F_{3} \subseteq \cdots \subseteq F_{m-1} \subseteq [n]$. Moreover, because $\CC|_{F_{\kappa}}$ is isomorphic to $\Ss(\kappa)-\Ss(\max\{0,s-m+\kappa\})$, for every $a \in F_{\kappa}$, there exists such a chain $F_{2,a}\subseteq F_{3,a} \subseteq \cdots \subseteq F_{\kappa-1,a}$ with $a \in F_{l,a}$ and $\CC|_{F_{l,a}}$ is isomorphic to $\Ss(l)-\Ss(\max\{0,s-m+l\})$ for $l \in [2, \kappa-1]$. 

Now if $d$ is the minimum distance of $\CC$, then the minimum distance of $\Ss(\kappa)-\Ss(\max\{0,s-m+\kappa\})$ is 
\[
\delta_{\kappa} = \left\lbrace
\begin{array}{ll}
q^{\kappa-1}-q^{s-m+k-1} = \frac{d}{q^{m-\kappa}} & \text{if } \kappa > m-s, \\
q^{\kappa-1} & \text{if } \kappa \leq m-s. \\
\end{array}
\right.
\]
Notice that $\delta_{\kappa}$ is less than $2$ if and only if $q=2$, $s=m-1$, and $\kappa=2$ by Corollary \ref{cor:locality_kappa_params}. 

Hence, if $q>2$ or $q=2$ and $s<m-1$, then $\CC$ is an H-LRC with $(m-2)$-level hierarchical locality having locality parameters 
\[
[(m-1,\delta_{m-1}), \ldots , (2,\delta_{2})].
\]
When $q=2$ and $s=m-1$, $\CC$ is an H-LRC with $(m-3)$-level hierarchical locality having locality parameters 
\[
[(m-1, \delta_{m-1}), \ldots , (3, \delta_{3})].
\]

Finally, $\CC$ achieves the new alphabet-dependent bound \eqref{eq:CM_HLRC_h} for H-LRC codes when $\lambda=0$ and $k_{\opt}^{q}$ is taken to be the Griesmer bound. 
\qed
\end{proof}

The table in Figure \ref{fig:binary_Sm} represents the binary linear codes obtained by the construction $\Ss_{2}(m)-\Ss_{2}(s)$. The columns are sorted by $s \in [0,4]$ and the rows are sorted by $m \in [2,6]$. Moreover, the lines describes the locality of dimension $m-1$ obtained by Theorem \ref{thm:local_isom}. Therefore, if $\CC$ and $\CC'$ are two codes in the table such that there exists a path from $\CC'$ to $\CC$, then $\CC$ has locality $\CC'$, \ie, each symbol of $\CC$ is contained in a restriction isomorphic to $\CC'$. Figure \ref{fig:binary_Sm} gives also the hierarchical locality of a binary code via the paths to smaller codes. Finally, the codes in blue are the binary Reed--Muller codes RM$(1,m-1)$. 

%Full table for binary codes. 

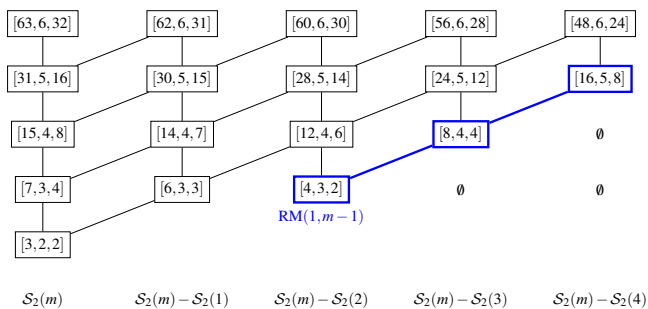
\begin{figure}
\centering
\resizebox{0.7\textwidth}{!}{%
\begin{tikzpicture}
	%Simplex S(m)
	\node (60) at (0,-2) {\small $\Ss_{2}(m)$};
	\node[shape=rectangle,draw=black] (50) at (0,-1) {\small $[3,2,2]$};
	\node[shape=rectangle,draw=black] (00) at (0,0) {\small $[7,3,4]$};
	\node[shape=rectangle,draw=black] (10) at (0,1) {\small $[15,4,8]$};
	\node[shape=rectangle,draw=black] (20) at (0,2) {\small $[31,5,16]$};
	\node[shape=rectangle,draw=black] (30) at (0,3) {\small $[63,6,32]$};
	
	 \path [-] (50) edge (00);
     \path [-] (00) edge (10);
     \path [-] (10) edge (20);
     \path [-] (20) edge (30);

     %Code S(m)-S(1)
     \node (61) at (2.5,-2) {\small $\Ss_{2}(m) - \Ss_{2}(1)$};
	\node[shape=rectangle,draw=black] (01) at (2.5,0) {\small $[6,3,3]$};
	\node[shape=rectangle,draw=black] (11) at (2.5,1) {\small $[14,4,7]$};
	\node[shape=rectangle,draw=black] (21) at (2.5,2) {\small $[30,5,15]$};
	\node[shape=rectangle,draw=black] (31) at (2.5,3) {\small $[62,6,31]$};
	
     \path [-] (01) edge (11);
     \path [-] (11) edge (21);
     \path [-] (21) edge (31);
     
    %Code S(m)-S(2)
    \node (62) at (5,-2) {\small $\Ss_{2}(m)-\Ss_{2}(2)$};
	\node[shape=rectangle,draw=blue, very thick, label=below:{\textcolor{blue}{RM$(1,m-1)$}}] (02) at (5,0) {\small $[4,3,2]$};
	\node[shape=rectangle,draw=black] (12) at (5,1) {\small $[12,4,6]$};
	\node[shape=rectangle,draw=black] (22) at (5,2) {\small $[28,5,14]$};
	\node[shape=rectangle,draw=black] (32) at (5,3) {\small $[60,6,30]$};
	
     \path [-] (02) edge (12);
     \path [-] (12) edge (22);
     \path [-] (22) edge (32);

	%Code S(m)-S(3)
	\node (63) at (7.5,-2) {\small $\Ss_{2}(m)-\Ss_{2}(3)$};
	\node (03) at (7.5,0) {\small $\emptyset$};
	\node[shape=rectangle,draw=blue, very thick] (13) at (7.5,1) {\small $[8,4,4]$};
	\node[shape=rectangle,draw=black] (23) at (7.5,2) {\small $[24,5,12]$};
	\node[shape=rectangle,draw=black] (33) at (7.5,3) {\small $[56,6,28]$};
	
     %\path [-] (03) edge (13);
     \path [-] (13) edge (23);
     \path [-] (23) edge (33);

	%Code S(m)-S(4)
	\node (64) at (10,-2) {\small $\Ss_{2}(m)-\Ss_{2}(4)$};
	\node (04) at (10,0) {\small $\emptyset$};
	\node (14) at (10,1) {\small $\emptyset$};
	\node[shape=rectangle,draw=blue, very thick] (24) at (10,2) {\small $[16,5,8]$};
	\node[shape=rectangle,draw=black] (34) at (10,3) {\small $[48,6,24]$};
	
     %\path [-] (01) edge (11);
     %\path [-] (13) edge (23);
     \path [-] (24) edge (34);

	%Diagonal part     
	\path [-] (50) edge (01);
   	\path [-] (00) edge (11);
    \path [-] (10) edge (21);
    \path [-] (20) edge (31);

    \path [-] (01) edge (12);
    \path [-] (11) edge (22);
    \path [-] (21) edge (32);
     
    \path [-,color=blue, very thick] (02) edge (13);
    \path [-] (12) edge (23);
    \path [-] (22) edge (33);

    \path [-,color=blue, very thick] (13) edge (24);
    \path [-] (23) edge (34);

\end{tikzpicture}
}
\caption{Table of the complete hierarchical locality for binary codes $\Ss_{2}(m)-\Ss_{2}(s)$ with $m \in [2,6]$ and $s \in [0,4]$.}
\label{fig:binary_Sm}
\end{figure}

\section{Conclusion}
In this paper, we presented a new alphabet-dependent bound for codes with hierarchical locality. Then, we worked on a class of codes obtained by deleting a Simplex code from another Simplex code of higher dimension. We derived the weight enumerator of these codes and the complete list of possible localities with closed repair sets. Finally, we used this list to show that these codes are optimal LRCs and optimal H-LRCs by the new bound.

\begin{acknowledgements}
This work was supported in part by the Academy of Finland, under grants \#276031, \#282938, and \#303819, and by the Technical University of Munich -- Institute for Advanced Study, funded by the German Excellence Initiative and the EU 7th Framework Programme under grant agreement \#291763, via a Hans Fischer Fellowship.
\end{acknowledgements}

\bibliographystyle{spmpsci}      % mathematics and physical sciences
\bibliography{references}

\end{document}